\newcommand{\rev}[1]{\textcolor{black}{#1}}
\theoremstyle{plain}
\newtheorem{theorem}{Theorem}[section]
\theoremstyle{definition}
\newtheorem{definition}{Definition}[section]
\title{Federated Learning from Pre-Trained Models: \\A Contrastive Learning Approach}
\author{%
Yue Tan$^{1}$, Guodong Long$^{1}$, Jie Ma$^{1}$, Lu Liu$^{2}$, Tianyi Zhou$^{3,4}$, Jing Jiang$^{1}$ \\
$^1$Australian Artificial Intelligence Institute, FEIT, University of Technology Sydney \\
$^2$Google Research, $^3$University of Washington, $^4$University of Maryland \\
\texttt{yue.tan@student.uts.edu.au, guodong.long@uts.edu.au} \\
\texttt{jie.ma-5@student.uts.edu.au, lu.liu.cs@icloud.com} \\
\texttt{tianyizh@uw.edu, jing.jiang@uts.edu.au}
}
\begin{document}

\maketitle

\begin{abstract}
Federated Learning (FL) is a machine learning paradigm that allows decentralized clients to learn collaboratively without sharing their private data. However, excessive computation and communication demands pose challenges to current FL frameworks, especially when training large-scale models. To prevent these issues from hindering the deployment of FL systems, we propose a lightweight framework where clients jointly learn to fuse the representations generated by multiple fixed pre-trained models rather than training a large-scale model from scratch. This leads us to a more practical FL problem by considering how to capture more client-specific and class-relevant information from the pre-trained models and jointly improve each client's ability to exploit those off-the-shelf models. In this work, we design a \textbf{Fed}erated \textbf{P}rototype-wise \textbf{C}ontrastive \textbf{L}earning (FedPCL) approach which shares knowledge across clients through their class prototypes and builds client-specific representations in a prototype-wise contrastive manner. Sharing prototypes rather than learnable model parameters allows each client to fuse the representations in a personalized way while keeping the shared knowledge in a compact form for efficient communication. We perform a thorough evaluation of the proposed FedPCL in the lightweight framework, measuring and visualizing its ability to fuse various pre-trained models on popular FL datasets. 
\end{abstract}

\vspace{-0.2cm}
\section{Introduction} \label{sec:intro}
\vspace{-0.2cm}

Federated learning (FL) is a promising field of machine learning that allows multiple clients to train together without sharing their private data~\cite{mcmahan2016communication}. Vanilla FL aims to train a single global model over all participating clients by periodically synchronizing their model parameters. However, the learned model usually does not perform well on all clients due to the statistical heterogeneity among local datasets~\cite{kairouz2021advances,li2020fedbn,chen2022calfat}. Personalized federated learning (PFL) is proposed to solve this problem by training a personalized model for each client. Recent studies on PFL leverage various techniques to enable more common underlying information shared across different clients~\cite{fallah2020personalized,t2020personalized,collins2021exploiting}. So far, FL and PFL have been widely used in computer vision~\cite{park2021federated}, natural language processing~\cite{lin2021fednlp}, graph data mining~\cite{xie2021federated,chen2022personalized}, and some practical applications, e.g., healthcare~\cite{long2022federated}, finance~\cite{long2020federated}, mobile Internet~\cite{9827604,jiang2020decentralized}, etc.\looseness-1

However, the models in real-world applications are usually large-scale neural networks which incur \textit{high computation costs} and require \textit{high communication bandwidth} when trained from scratch. This can make it infeasible to train such models in some practical FL scenarios, e.g., low-resource device-based federated learning. To alleviate the above issues, we propose a lightweight FL framework that uses \textit{multiple fixed pre-trained backbones} as the encoder, followed by learnable layers to fuse the representations generated by the backbones for each client. The proposed framework is capable of fusing the representations generated by pre-trained models with various architectures or obtained from various source data, expanding the scope of federated learning by integrating off-the-shelf foundation models. Also, it makes it possible to utilize large-scale pre-trained models, e.g., VisionTransformer~\cite{dosovitskiy2020image} and Swin Transformer~\cite{liu2021swin}, in a computation resource-constrained case to enhance the overall performance. Using the pre-trained foundation models as the fixed encoder can efficiently reduce costs because neither complicated backward propagation computation nor large-scale neural network transmission between the server and clients is needed during the training stage. As shown in Table~\ref{tab:comm_comp}, compared to training a ResNet18~\cite{he2016deep} from scratch using FedAvg~\cite{mcmahan2016communication}, a much smaller number of parameters are communicated per round when learning to fuse the representations generated by fixed pre-trained models. Besides, higher performance can be achieved when using pre-trained models after the same communication rounds. Compared with using a single pre-trained model, using multiple models pre-trained on diverse datasets can provide a more comprehensive view for an input sample~\cite{liu2020universal}. \looseness-1

\begin{wraptable}{r}{7.9cm}
    \vspace{-0.4cm}
    \small
    \caption{{\color{black}Compared with training a single ResNet18 from scratch, less training time per round and fewer learnable parameters are needed when using multiple fixed ResNet18 pre-trained on different datasets~\cite{triantafillou2019meta}. Experiments are implemented with FedAvg~\cite{mcmahan2016communication} on Digit-5 dataset~\cite{zhou2020learning} under feature shift non-IID setting~\cite{li2020fedbn}. The number of communication rounds is 50.}}
	\label{tab:comm_comp}
    \centering
    \resizebox{7.9cm}{!}{
    \begin{tabular}{cccc}
	\toprule
    Mode & {Param.}$\downarrow$ & {\color{black}Time$\downarrow$} & Acc$\uparrow$ \\
	\midrule
    Training from scratch &	11M & {\color{black}0.95s} & 31.75(3.07) \\
    \midrule
    Using 1 pre-trained model & 0.13M & {\color{black}0.31s} & 38.65(2.65) \\
    Using 3 pre-trained models & 0.40M & {\color{black}0.64s} & 40.47(3.05) \\
    \bottomrule
    \end{tabular}
    }
    \vspace{-0.2cm}
\end{wraptable}

To enable a better personalized representation ability for each client under this lightweight FL framework, we need to select an appropriate information carrier to share common underlying knowledge across clients.
Motivated by~\cite{snell2017prototypical,luo2021no,tan2021fedproto}, class-wise prototypes, defined as ``a representative embedding'' for a specific class, can be an effective information carrier for communication between the server and clients. Sharing prototypes allows for better knowledge sharing across various learning domains, which has been proved in transfer learning~\cite{quattoni2008transfer} and multi-task learning~\cite{kang2011learning} scenarios. To efficiently extract the useful shared information learned from the pre-trained models via prototypes, we design an algorithm called \textbf{Fed}erated \textbf{P}rototype-wise \textbf{C}ontrastive \textbf{L}earning (FedPCL) where both local prototypes and global prototypes are used for knowledge sharing in a supervised contrastive manner. By maximizing the agreement between the fused representation and its corresponding prototypes with contrastive learning, class-relevant information and semantically meaningful knowledge are captured by each client. Concretely, global prototypes force the fused representation to be closer to the global class center, while local prototypes force clients to share more higher-level feature information in a pairwise way. Using prototypes to realize inter-client communication allows for knowledge sharing in a latent space and brings additional benefits. Compared with directly transmitting the representation of a sample, prototypes can eliminate bias from a single sample and protect clients' privacy. Compared with model parameters, prototypes are much smaller in size, which significantly reduces communication costs. 

We quantitatively evaluate the performance of FedPCL and state-of-the-art FL algorithms under the lightweight framework based on benchmark foundation models and datasets. We also perform extensive experiments to validate the effectiveness of FedPCL in fusing representations output by different backbones and its capability to integrate knowledge from backbones with different model architectures. Our main contributions are summarized as follows: 

\begin{itemize}
    \item We take the first step towards integrating pre-trained models into federated learning to form a lightweight FL framework, which substantially reduces computation and communication costs of current FL frameworks.
    \item We further propose FedPCL, a novel algorithm that uses class prototypes as the information carrier and conducts contrastive learning during local updates, which allows clients to share more class-relevant knowledge from both local and global prototypes.
    \item Experiments are conducted on a variety of benchmark foundation models and datasets to measure FedPCL's ability to fuse various pre-trained models for each client. The results indicate that FedPCL outperforms baselines with a better personalization and knowledge integration ability.
\end{itemize}

\vspace{-0.2cm}
\section{Related Work} \label{sec:related-work}
\vspace{-0.2cm}
\textbf{Personalized Federated Learning.}
Personalized Federated Learning (PFL) aims to train a personalized model for each client in the FL framework so that the model can achieve better performance on the local dataset. Existing approaches for PFL are based on various techniques. \cite{dinh2020personalized,hanzely2020lower,li2021ditto}~add an additional term to the original loss function of each client to produce better personalized models according to the private data. \cite{li2020fedbn,collins2021exploiting}~share part of the model and keep personalized layers private to achieve personalization. 
\rev{\cite{shamsian2021personalized}~proposes to use a central hypernetwork to generate personalized models for clients. \cite{zhang2020personalized}~enables a more flexible personalization by adaptively weighted aggregation.}
\cite{fallah2020personalized,jiang2019improving}~study PFL from a meta-learning perspective where a meta-model is learned to generate the initialized local model for each client.

\textbf{Contrastive Learning.}
Contrastive learning has been widely applied to self-supervised learning scenarios in recent years, achieving state-of-the-art performance in the unsupervised training of deep image models~\cite{liu2021self,xie2021detco} and graph models~\cite{liu2022towards,liu2021anomaly,zheng2022rethinking}. A number of works are focused on learning an encoder where the embeddings of the same sample are pulled closer and those of different samples are pushed apart~\cite{zheng2021towards,hjelm2018learning,ye2019unsupervised,chen2020simple}. \cite{khosla2020supervised}~extends contrastive learning from self-supervised settings to fully supervised settings, enabling us to better exploit label information with contrastive learning. There are also some works incorporate contrastive learning into federated learning to assist local training to achieve higher model performance~\cite{li2021model,mu2021fedproc,zhang2020federated}.

\textbf{Prototype Learning.}
Prototypes have been widely used in a variety of tasks in transfer learning~\cite{quattoni2008transfer}, multi-task learning~\cite{kang2011learning}, and few-shot learning~\cite{snell2017prototypical,liu2019learning,li2020confusable}. It is usually defined as the mean feature vectors of samples within the same class~\cite{wieting2015towards,babenko2015aggregating}. The authors in~\cite{hoang2020learning} represent task-agnostic information by prototypes for distributed machine learning systems and propose a multi-task model fusion method that integrates prototypes for a new task. Since the prototype has the ability to generalize semantic knowledge from similar samples, it is used to assist the local training in federated learning by several studies~\cite{michieli2021prototype,tan2021fedproto,mu2021fedproc,li2021model}.

\textbf{Pre-Trained Foundation Model.}
Due to the huge number of parameters and the broad data available for training, pre-trained foundation models can better capture knowledge for downstream tasks and lead to green AI~\cite{han2021pre,9698401}. Recently, pre-trained models (e.g., ViT~\cite{dosovitskiy2020image}, DETR~\cite{carion2020end}, BERT~\cite{devlin2019bert}) have been widely investigated in both vision and natural language processing (NLP) tasks~\cite{bommasani2021opportunities}. Extracting task-specific knowledge from the pre-trained models has the potential to achieve the state-of-the-art performance due to their generality and adaptability to different tasks~\cite{radford2021learning,lin2014microsoft,chen2015microsoft}. However, how to integrate the pre-trained models into a federated learning paradigm and keep the whole framework lightweight still remains an open problem. 

\vspace{-0.3cm}
\section{Problem Formulation} \label{sec:problem}
\vspace{-0.3cm}
In this section, we formulate our proposed lightweight FL framework which integrates off-the-shelf pre-trained models as fixed backbones and learns to fuse them adaptively. We start from the general framework of FL, and then explicitly define the problem and explain the proposed global objective.

\textbf{General FL Framework.}
Formally, the global objective of general FL across $m$ clients is
\begin{equation}
\min _{\left(w_{1},w_{2}, \cdots, w_{m}\right)} \frac{1}{m} \sum_{i=1}^{m} \frac{|D_i|}{N} L_i\left(w_{i} ; D_{i}\right)
\end{equation}
where $L_i$ and $w_i$ are the local loss function and model parameters for client $i$, respectively. $D_i$ is the private dataset of the $i$-th client. $N$ is the total number of samples among all clients.

The objective of general FL, such as vanilla FedAvg and its variants, is to learn an optimal global model $w$ across $m$ clients, where $w=w_1=w_2=\cdots=w_m$~\cite{mcmahan2016communication}. This can be achieved by periodically synchronizing the model parameters of all clients at the server. However, parameter synchronization sometimes deteriorates the performance on local datasets in the presence of data heterogeneity~\cite{collins2021exploiting}.
Some recent studies~\cite{t2020personalized,zhang2021parameterized} also explore personalized FL by applying various constraints and regularization terms, which allows clients to keep different models $w_i, i \in [1,m]$ to achieve higher performance on their local datasets.

\textbf{The Proposed Lightweight FL Framework.}
Similar to most FL frameworks, there are $m$ clients and a central server involved in the multiple pre-trained backbone-based framework. Each client $i \in [1,m]$ owns $K$ shared and fixed backbones and a private dataset $D_i$ that cannot be shared with each other. Each learning model can be seen as a combination of at least two parts: (i)~\textit{Feature Encoder} $r(\cdot;\Phi^{*}): \mathbb{R}^{d} \rightarrow \mathbb{R}^{K \times d_e}$, comprising $K$ fixed pre-trained backbones, each of which maps the raw sample $\mathbf{x}$ of size $d$ to a representation vector of size $d_e$. $K$ representation vectors are concatenated together as the output $r(\mathbf{x};\Phi^{*})$, denoted as $r_{\mathbf{x}}$ for short. (ii)~\textit{Projection Network} $h\left(\cdot;\theta_{i}\right): \mathbb{R}^{K \times d_e} \rightarrow \mathbb{R}^{d_h}$, which fuses the $K$ representation vectors and maps $r_{\mathbf{x}}$ from one latent space to another for further representation learning. The formal definition is provided as follows.

\begin{definition}
Let $\phi_{k}^{*}$, where $k \in [1,K]$, denote the optimal parameter of the $k$-th backbone pre-trained on a specific dataset, $r_k$ be the embedding function of the $k$-th backbone, and $\textbf{x}$ denote an instance sampled from a local dataset. We define the \textit{concatenated representation} output by the feature encoder as
\begin{equation}
r(\mathbf{x};\Phi^{*}):=\operatorname{concat}\left(r_{1}(\mathbf{x};\phi_{1}^{*}), \ldots, r_{K}(\mathbf{x};\phi_{K}^{*})\right).
\end{equation}
\end{definition}
For client \textit{i}, the projection network $h$, parameterized by $\theta_i$, aims to fuse the representations output by multiple backbones into another abstract space. The output of the projection network is computed as 
\begin{equation}
z(\mathbf{x})=h\left(r_{\mathbf{x}};\theta_i\right).
\end{equation}

Based on the above definition, we formulate the global objective of the lightweight FL framework as
\begin{equation} \label{eq:global_obj}
\begin{aligned}
\min _{\{\theta_{1}, \theta_{2}, \ldots, \theta_{m}\}} \sum_{i=1}^{m} \frac{|D_i|}{N} &\mathbb{E}_{(\mathbf{x}, y) \in D_{i}}[ L_i\left(\theta_{i} ; z\left(\mathbf{x}\right), y\right)]\\
\text { s.t. }  z(\mathbf{x})=h\left(r(\mathbf{x};\Phi^{*});\theta_i\right) \ &{\rm where}\ \Phi^{*}= \{\phi^{*}_{1},\phi^{*}_{2}, \cdots, \phi^{*}_{K}\}.
\end{aligned}
\end{equation}

The target of the framework is to learn the personalized projection network $\{\theta_i\}_{i=1}^{m}$ for each client. Given an input sample $\mathbf{x}$, the representations output by pre-trained backbones are concatenated together as $r(\mathbf{x};\Phi^{*})$. Then, the projection network $h(\theta_i)$ optimized for each client converts the concatenated representation to $z(\mathbf{x})$. 

The proposed multi-backbone setting is friendly to most FL methods. The learnable parameter $\theta$ in the projection network can be fully or partly shared across clients for different purposes. For example, FedAvg~\cite{mcmahan2016communication} and its variants~\cite{li2018federated,long2022multi,zhang2021practical,ghosh2020efficient} can synchronize $\theta$ every round, and personalized FL methods can also adapt to this framework without much modification~\cite{dinh2020personalized,collins2021exploiting,ma2022convergence}. 

\begin{figure*}[t!]
	\centering
	\setlength{\belowcaptionskip}{-0.3cm}
	\includegraphics[width=0.9\textwidth]{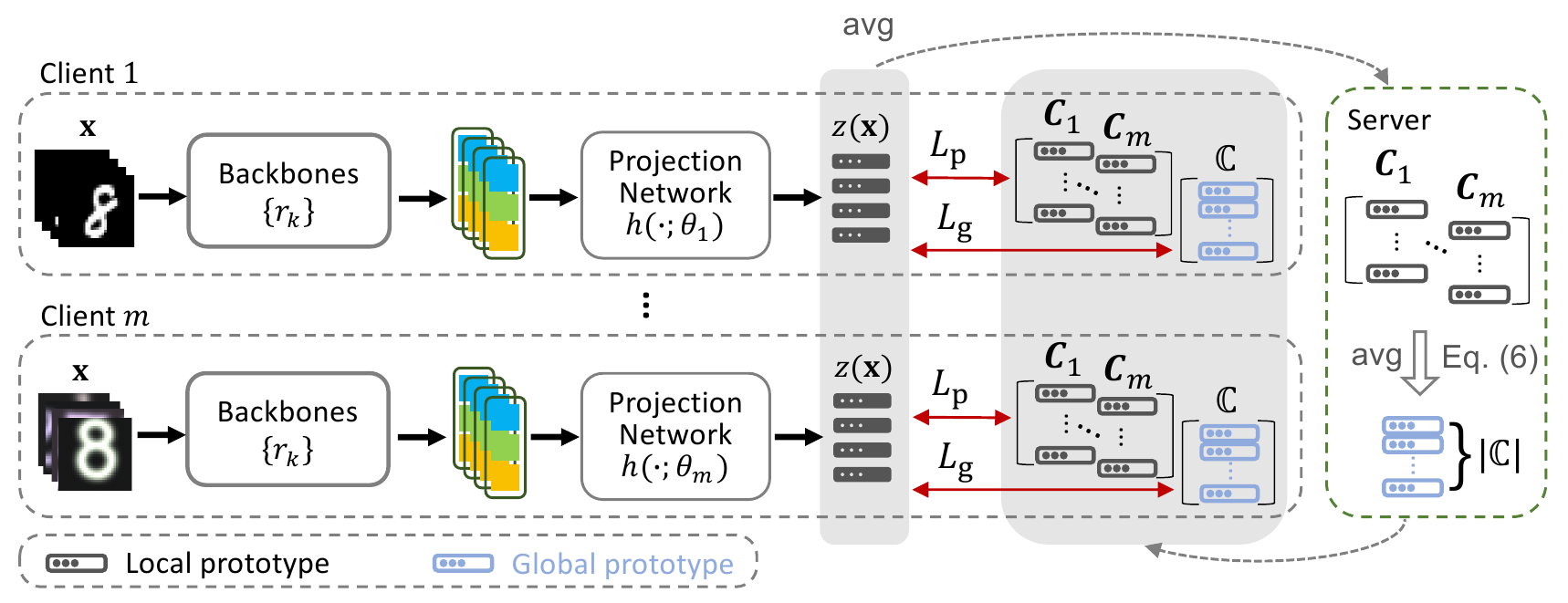}
	\caption{An overview of the proposed lightweight federated learning framework. This example assumes that for each client, there are three pre-trained backbones, with the block in different colors illustrating their backbone-specific representation.}
	\label{fig:framework}
\end{figure*}

\vspace{-0.2cm}
\section{Federated Prototype-wise Contrastive Learning (FedPCL)} \label{sec:fedpcl}
\vspace{-0.2cm}
In this section, we elaborate our proposed algorithm \textbf{Fed}erated \textbf{P}rototype-wise \textbf{C}ontrastive \textbf{L}earning (FedPCL), which is illustrated in Figure~\ref{fig:framework}. In each client, $z(\mathbf{x})$ is generated by the projection network which fuses the representations from multiple backbones. Then, to share the common underlying knowledge across clients, we employ a prototype-based communication scheme to transmit and aggregate prototype sets between the server and clients. With the prototypes returned from the server, we perform local optimization via well-crafted prototype-wise contrastive loss function, which extracts class-relevant information while sharing more inter-client knowledge in the latent space. We provide detailed illustration for each procedure in the rest of this section.

\textbf{Prototype as the Information Carrier.}  
To capture more class-relevant information and semantically meaningful knowledge, we propose to transmit prototypes between the server and clients. Compared with transmitting the learnable model parameters, there are several advantages brought by transmitting class-wise prototypes. Firstly, prototype is more compact in form, which significantly decreases communication costs required during the training process. Secondly, non-parametric communication allows each client to learn a more customized local model without synchronizing parameters with others. Thirdly, prototypes are high-level statistic information rather than raw features, which raise no additional privacy concerns to the system and are robust to gradient-based attacks~\cite{zhu2019deep,chen2022practical}. 

To extract useful class-relevant information from the local side, we construct a local prototype as the information carrier for knowledge uploading. Specifically, it is defined in the latent space of projection network's output by the mean of the fused representations within the same class $j$,
\begin{equation}\label{eq:local_proto_gen}
C_{i}^{(j)}:=\frac{1}{\left|D_{i, j}\right|} \sum_{(\mathbf{x}, y) \in D_{i, j}} z\left(\mathbf{x}\right),
\end{equation}
\noindent where $D_{i,j}$ refers to the subset of $D_i$ composed of all instances belonging to class $j$, and $\boldsymbol{C}_{i}$ denotes the local prototype set of the $i$-th client. After the above computation, the local prototype set of each client is sent to the central server for knowledge aggregation, which shares the local class-relevant information extracted on each specific client based on its local dataset. 

\textbf{Server Aggregation.}
After receiving local prototype sets $\{\boldsymbol{C}_i\}_{i=1}^{m}$ from all participating clients, the server first computes the global prototype as 
\begin{equation}\label{eq:global_proto_agg}
\bar{C}^{(j)}:=\frac{1}{\left|\mathcal{N}_{j}\right|} \sum_{i \in \mathcal{N}_{j}} \frac{\left|D_{i, j}\right|}{N_{j}} C_{i}^{(j)},
\end{equation}
\noindent where $\mathcal{N}_{j}$ denotes the set of clients that own instances of class $j$, and $N_j$ denotes the number of instances belonging to class $j$ over all clients. The global prototype set is denoted as $\mathbb{C}=\{\bar{C}^{(1)}, \bar{C}^{(2)}, \dots\}$. With such an aggregation mechanism, the global prototype set summarizes coarse-grained class-relevant knowledge shared by all clients, which provides a high-level perspective for representation learning.

After aggregation, the server sends both the global prototype set and the full local prototype sets collected from all clients back to {every} clients. {For the situation where} only a few classes in each client in some non-IID cases, {we introduce} a prototype padding procedure in the server to ensure that each local prototype set contains prototypes corresponding to all classes{:}
\begin{equation}\label{eq:padding}
{C}_{i}^{(j)}= 
\begin{cases}
{C}_{i}^{(j)}, & i \in \mathcal{N}_{j} \\ 
\bar{C}^{(j)}, & i \notin \mathcal{N}_{j} 
\end{cases}
{.}
\end{equation}
Prototype-based communication and aggregation allow each client to own a unique projection network that is able to fuse the general representations in a customized way. The returned local prototype sets can encourage a mutual learning from a client-relevant perspective while the global prototype set, where each element indicates a class center in the overall data, provides a chance to learn from a highly summarized client-irrelevant perspective. 

\textbf{Local Training.} \label{sec:local_training}
After receiving the prototype sets from the server, the main target of local training is to efficiently extract useful knowledge from the local and the global prototypes, respectively, so as to maximally benefit local representation learning. To achieve that, we propose a prototype-wise supervised contrastive loss {that consists of two terms, i.e., global term and local term.} 

To force the fused representation $z(\mathbf{x})$ generated by the local projection network to be closer to its corresponding global class center so as to extract more class-relevant but client-irrelevant information, we define the global prototype-based loss {term} as
\begin{equation}\label{eq:l_g}
\small
{L}_{\rm g}=\sum_{(\mathbf{x},y) \in D_i}- \log \frac{\exp \left(z_{\mathbf{x}} \cdot \bar{C}^{(y)} / \tau\right)}{\sum_{y_a \in A(y)} \exp \left(z_{\mathbf{x}} \cdot \bar{C}^{(y_a)} / \tau\right)}.
\end{equation}
\noindent where $z_{\mathbf{x}}$ represents $z(\mathbf{x})$ for short, $A(y) := \{y_a \in [1,|\mathbb{C}|]: y_a \neq y\}$ is the set of labels distinct from $y$, $\tau$ is the temperature that can adjust the tolerance for feature difference~\cite{zhang2021temperature,chen2020simple,khosla2020supervised}. For a specific instance $\mathbf{x}$ sampled from $D_i$, we use an inner dot product to measure the similarity between the fused representation $z_{\mathbf{x}}$ and prototypes.

{Apart from the global term}, to align $z(\mathbf{x})$ with each client's local prototypes by alternate client-wise contrastive learning in the latent space and enable more inter-client knowledge sharing, we define the local prototype-based loss {term} as
\begin{equation}\label{eq:l_p}
\small
{L}_{\rm p}=\sum_{(\mathbf{x},y) \in D_i}- \frac{1}{m} \sum_{p=1}^{m}\log \frac{\exp \left(z_{\mathbf{x}} \cdot {C}_{p}^{(y)} / \tau\right)}{\sum_{y_a \in A(y)} \exp \left(z_{\mathbf{x}} \cdot {C}^{(y_a)}_{p} / \tau\right)}.
\end{equation}
Given the ground-truth label $y$ of $\mathbf{x}$, a prototype set can be divided as one positive prototype ${C}^{(y)}$ and a set of negative prototypes ${C}^{(y_a)}$. \rev{Both $L_{\rm g}$ and $L_{\rm p}$ force the representation of a sample to be closer to those positive prototypes and apart from those negative prototypes. 
$\bar{C}^{(y)}$ in $L_{\rm g}$ and ${C}^{(y)}$ in $L_{\rm p}$ summarize abstract class-relevant information to different granularity, which provides guidance for optimizing the local projection network from different perspectives.} For the $i$-th client, the local objective function in Eq.~(\ref{eq:global_obj}) is defined as a combination of ${L}_{\rm g}$ and ${L}_{\rm p}$ in the following form,
\begin{equation}\label{eq:local_obj}
\small
L\left(\theta_{i} ; z\left(\mathbf{x}\right), y, \mathbb{C}, \left\{\boldsymbol{C}_{p}\right\}_{p=1}^{m}\right) = L_{\rm g}\left(\theta_{i} ; z\left(\mathbf{x}\right), y, \mathbb{C}\right) + L_{\rm p}(\theta_{i} ; z\left(\mathbf{x}\right), y, \left\{\boldsymbol{C}_{p}\right\}_{p=1}^{m}).
\end{equation}

\begin{wrapfigure}{r}{0.55\textwidth}
    \vspace{-0.8cm}
    \begin{minipage}{0.55\textwidth}
      \begin{algorithm}[H]
        \caption{FedPCL}
        {\bf Input:}
    	$D_i, \theta_i, i=1,\cdots, m$, and $K$ pre-trained backbones with parameters $\phi_1^{*},\phi_2^{*},\cdots,\phi_K^{*}$, respectively.\\
    	{\bf Server executes:} \\
    	\vspace{-0.4cm}
        \begin{algorithmic}[1]
		\STATE Initialize prototype sets $\{{\boldsymbol{C}}_{p}\}_{p=1}^{m}$.
		\FOR{each round $T = 1,2,...$} 
		\FOR{each client $i$ {\bf in parallel}}
		\STATE $\boldsymbol{C}_{i} \leftarrow$ LocalUpdate$(i, \mathbb{C}, \{{\boldsymbol{C}}_{p}\}_{p=1}^{m})$
		\ENDFOR
		\STATE Update global prototype by Eq.~(\ref{eq:global_proto_agg}).
		\ENDFOR
	    \end{algorithmic}
	    {\bf LocalUpdate}$(i, \mathbb{C}, \{{\boldsymbol{C}}_p\}_{p=1}^{m})$: \\
    	\vspace{-0.4cm}
    	\begin{algorithmic}[1]
    		\FOR{each local epoch}
    		\FOR{each batch in $D_i$}
    		\STATE Compute ${L}_{\rm g}$ by Eq.~(\ref{eq:l_g}) with global prototypes.
    		\vspace{-0.4cm}
    		\STATE Compute ${L}_{\rm p}$ by Eq.~(\ref{eq:l_p}) with local prototypes.
    		\STATE Update $\theta_i$ by Eq.~(\ref{eq:local_obj}).
    		\ENDFOR
    		\ENDFOR
    		\STATE Compute local prototypes by Eq.~(\ref{eq:local_proto_gen}).
    		\RETURN ${\boldsymbol{C}}_{i}$
    	\end{algorithmic}
    	\label{alg:alg1}
      \end{algorithm}
    \end{minipage}
    \vspace{-0.8cm}
\end{wrapfigure}
\noindent 
At the end of the local training in each round, clients upload their local prototype set $\boldsymbol{C}_i$ to the server. We present the FedPCL algorithm in detail in Algorithm~\ref{alg:alg1}. 

\textbf{Prototype-based Inference.} 
{After adequate optimization, in each client, the local prototypes generated by projection network not only contain compact local class-wise information but also absorb general knowledge from all participating clients. Considering their powerful representation capability, we leverage the local prototypes to predict unknown labels in the inference stage. Concretely, for a test sample, we first calculate the similarity scores between the output of projection network and every local prototypes. Then, the predicted result is the class with the maximum similarity score. }\looseness-1

\textbf{Generalization Bound.}
We provide insights into the performance analysis of {FedPCL}. A detailed description and derivations can be found in Appendix~\ref{sec:proof}.

\begin{theorem} \label{th:th1}
{\rm (Generalization Bound of FedPCL.)} 
Consider an FL system with $m$ clients. Let $\mathcal{D}_1, \mathcal{D}_2, \cdots, \mathcal{D}_m$ be the true data distribution and $\widehat{\mathcal{D}}_1, \widehat{\mathcal{D}}_2,\cdots, \widehat{\mathcal{D}}_m$ be the empirical data distribution. Denote the projection network $h$ as the hypothesis from $\mathcal{H}$ and $d$ be the VC-dimension of $\mathcal{H}$. The total number of samples over all clients is $N$. Then, with probability at least $1-\delta$:
\begin{equation}\label{eq:generalization_bound}
\small
\begin{split}
& \max _{\left(\theta_{1}, \theta_{2}, \ldots, \theta_{m}\right)}\left|\sum_{i=1}^{m} \frac{|D_i|}{N} L_{\mathcal{D}_{i}}\left(\theta_{i} ; \mathbb{C},\left\{\boldsymbol{C}_{p}\right\}_{p=1}^{m}\right)- \sum_{i=1}^{m} \frac{|D_i|}{N} L_{\widehat{\mathcal{D}}_{i}}\left(\theta_{i} ; \mathbb{C},\left\{\boldsymbol{C}_{p}\right\}_{p=1}^{m}\right)\right| \\
& \leq \sqrt{\frac{N}{2} \log \frac{(m+1)|\mathbb{C}|}{\delta}}+\sqrt{\frac{d}{N} \log \frac{e N}{d}}.
\end{split}
\end{equation}
\end{theorem}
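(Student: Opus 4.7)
The plan is to control the generalization gap by separating the randomness over the prototype set (which appears as an input to the loss in Eq.~(\ref{eq:local_obj})) from the uniform deviation over the projection network parameters $\theta_i$. First I would rewrite the aggregated empirical risk $\sum_i \frac{|D_i|}{N} L_{\widehat{\mathcal{D}}_i}$ as an average of per-sample losses over the pooled sample of size $N$, so that the quantity on the left-hand side of Eq.~(\ref{eq:generalization_bound}) becomes a standard empirical-process deviation $|\mathbb{E}_{\widehat{\mathcal{D}}}[L] - \mathbb{E}_{\mathcal{D}}[L]|$ with an outer supremum over $(\theta_1,\ldots,\theta_m)$.

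Next I would exploit the structure of the prototype-wise contrastive loss in Eqs.~(\ref{eq:l_g}) and (\ref{eq:l_p}): it is a log-softmax of inner products $z_{\mathbf{x}}\cdot C/\tau$ computed against exactly $(m+1)|\mathbb{C}|$ prototype vectors, namely $|\mathbb{C}|$ global prototypes from $\mathbb{C}$ and $m|\mathbb{C}|$ local prototypes from $\{\boldsymbol{C}_p\}_{p=1}^m$. For each such prototype-indexed sub-term, after normalization into a bounded range, Hoeffding's inequality yields a pointwise concentration in $\theta$, and a union bound across the $(m+1)|\mathbb{C}|$ prototypes and the failure parameter $\delta$ produces the first summand $\sqrt{\tfrac{N}{2}\log\frac{(m+1)|\mathbb{C}|}{\delta}}$.

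To upgrade from pointwise to uniform deviation over $\theta_1,\ldots,\theta_m\in\mathcal{H}$, I would invoke the classical VC-type growth-function argument: Sauer's lemma bounds the number of dichotomies realizable by $\mathcal{H}$ on $N$ samples by $(eN/d)^d$, and the standard symmetrization combined with Massart's finite-class lemma produces the capacity term $\sqrt{\tfrac{d}{N}\log\frac{eN}{d}}$. A triangle inequality then combines the prototype-union term and the capacity term into the bound claimed in Eq.~(\ref{eq:generalization_bound}).

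The main obstacle is that the local prototypes $\boldsymbol{C}_p$ in Eq.~(\ref{eq:local_proto_gen}) and the global prototype $\bar{C}$ in Eq.~(\ref{eq:global_proto_agg}) are themselves statistics of the training data, so the ``labels'' $\mathbb{C},\{\boldsymbol{C}_p\}$ that enter the loss are not independent of the empirical distribution. I would handle this by adopting the reading suggested by the theorem's signature $L(\theta_i;z(\mathbf{x}),y,\mathbb{C},\{\boldsymbol{C}_p\}_{p=1}^m)$, namely conditioning on the prototype sets and treating them as fixed inputs to the supervised contrastive objective; the Hoeffding step then applies to the remaining randomness over $(\mathbf{x},y)$. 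A secondary subtlety to verify is the exact normalization of the per-sample loss and of the weights $|D_i|/N$ so that the scaling inside the first square root matches $\tfrac{N}{2}$ rather than $\tfrac{1}{2N}$; I would trace this back to whether the sum in Eqs.~(\ref{eq:l_g}) and (\ref{eq:l_p}) is taken as a sum or an average over $D_i$, as this single convention determines the final form of the concentration constant.
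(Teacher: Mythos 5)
Your overall architecture matches the paper's: a concentration step combined with a union bound over the $(m+1)|\mathbb{C}|$ prototypes yields the first summand, and a VC--type capacity argument yields the second. The one place where your route genuinely diverges, and where it would not quite deliver the stated two-term bound, is the concentration step. You propose Hoeffding \emph{pointwise} in $\theta$ and then a triangle inequality to splice in the capacity term; but a pointwise deviation bound does not control the deviation of $\max_{(\theta_1,\ldots,\theta_m)}$ around anything, and the classical symmetrization-plus-Sauer route you invoke produces a single term of the form $\sqrt{(d\log(eN/d)+\log(1/\delta))/N}$ rather than two additive square roots. The paper instead applies McDiarmid's bounded-differences inequality directly to the supremum $\max_{\theta}\bigl(\sum_i \frac{|D_i|}{N}L_{\mathcal{D}_i}-\sum_i \frac{|D_i|}{N}L_{\widehat{\mathcal{D}}_i}\bigr)$, which concentrates that supremum around its \emph{expectation}; the expectation of the supremum is then bounded by a weighted sum of per-client Rademacher complexities, each controlled by $\sqrt{(d/|D_i|)\log(e|D_i|/d)}$, and Jensen's inequality collapses the weighted sum into $\sqrt{(d/N)\log(eN/d)}$. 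That decomposition (deviation-from-mean plus mean-of-supremum) is what makes the two terms legitimately additive, so you should replace your pointwise Hoeffding with McDiarmid on the supremum.

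Two of your flagged subtleties are well taken and are in fact glossed over by the paper itself: the prototypes are data-dependent statistics, and the paper handles this exactly as you propose, by conditioning on them as fixed inputs and taking a Boole union bound over the $(m+1)|\mathbb{C}|$ of them (a step that does not really resolve the dependence); and the $\sqrt{N/2}$ scaling, which grows with $N$, traces to the paper taking the bounded-difference constants $c_i=1$ for each of $N$ coordinates of an \emph{unnormalized} sum-form loss, precisely the convention issue you identified.
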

\noindent Theorem~\ref{th:th1} indicates that compared with the model trained on an ideal data distribution, the performance of the empirically trained model is associated with the VC-dimension of $\mathcal{H}$ and the size of the prototype set $\mathbb{C}$. \rev{An expected performance gap can be achieved by using appropriate projection network and number of classes. The generalization bound is also important to ensure a satisfying performance of FedPCL, especially when applying it to some safety-critical scenarios~\cite{zhang2021parameterized,lin2020ensemble,mansour2020three}.}

\begin{table*}[htbp!]
    \caption{Test accuracy under the feature shift non-IID setting. In the column labeled BB (short for backbone), \textit{s} is for a single pre-trained backbone and \textit{m} is for multiple pre-trained backbones. \# of Comm Params refers to the average number of parameters sent from a client to the server per round.}
	\label{tab:perf_smbb_li}
	\resizebox{1.0\textwidth}{!}{
    \begin{centering}
    \begin{tabular}{c|l|cccccc|p{4.5em}<{\centering}}
	\toprule
	\multirow{2}*{\makecell[c]{BB}} & \multirow{2}*{{\textbf{Method}}} & \multirow{2}*{MNIST}& \multirow{2}*{SVHN}& \multirow{2}*{USPS}& \multirow{2}*{Synth}& \multirow{2}*{MNIST-M}& \multirow{2}*{Avg} & \# of Comm Params\\
	\midrule
    \multirow{7}*{\textit{s}} & FedAvg & 70.65(1.15) & 17.10(0.20) & 70.24(1.62) & 32.90(0.75) & 29.33(1.18) & 44.04(0.98) & 133,632 \\
    & pFedMe & 71.13(3.63) & 13.18(1.78) & 69.20(0.30) & 36.25(3.35) & 25.25(2.25) & 43.00(2.26) & 133,632 \\
    & PerFedAvg & 52.68(7.03) & 16.28(1.23) & 53.66(6.58) & 29.05(3.45) & 24.38(2.38) & 35.21(4.13) & 133,632 \\
    & FedRep & 64.00(2.20) & 17.88(1.08) & 70.44(1.27) & 36.50(1.55) & 31.90(0.05) & 44.14(2.03) & 131,072 \\
    & FedProto & 80.40(2.75) & 17.03(0.38) & {88.47}(0.91) & {40.90}(1.10) & 32.85(0.75) & 51.93(1.18) & 2,560 \\
    & Solo & 60.40(2.25) & 15.60(0.20) & 75.28(4.48) & 34.65(0.05) & 28.48(0.53) & 42.88(1.50) & -\\
    & Ours & \textbf{82.75}(0.40) & \textbf{18.12}(0.42) & \textbf{88.82}(0.15) & \textbf{41.40}(0.60) & \textbf{33.05}(0.95) & \textbf{52.83}(0.21) & 2,560 \\
    \midrule
    \multirow{7}*{\textit{m}} & FedAvg & 71.68(2.93) & 18.45(0.45) & 72.95(0.86) & 37.35(1.35) & 33.70(2.55) & 46.83(1.63) & 395,776\\
    & pFedMe & 67.45(2.70) & 15.43(0.38) & 65.66(7.20) & 33.55(4.60) & 31.80(0.20) & 42.78(3.01) & 395,776\\
    & PerFedAvg & 56.03(2.73) & 17.03(0.63) & 57.55(0.27) & 34.90(2.80) & 30.98(1.53) & 39.30(1.59) & 395,776\\
    & FedRep & 77.25(1.75) & 16.40(0.50) & 80.25(0.32) & 37.63(2.18) & 36.53(0.28) & 49.61(1.05) & 393,216\\
    & FedProto & 83.78(0.83) & 17.90(0.10) & \textbf{91.74}(0.00) & 43.70(2.45) & 36.43(1.58) & 54.71(0.99) & 2,560 \\
    & Solo & 70.43(4.63) & 15.00(0.40) & 84.90(0.24) & 37.18(2.73) & 34.35(2.20) & 48.37(2.04) & -\\
    & Ours & \textbf{84.65}(0.15) & \textbf{19.38}(0.63) & 90.74(0.53) & \textbf{44.73}(0.37) & \textbf{37.25}(0.28) & \textbf{55.34}(0.34) & 2,560 \\
    \bottomrule
    \end{tabular}
    \end{centering}
    }
	\vspace{-0.5cm}
\end{table*}

\vspace{-0.2cm}
\section{Experiments} \label{sec:experiments}
\vspace{-0.2cm}
\subsection{Experimental Setup} \label{sec:exp_setup}
\vspace{-0.1cm}
\textbf{Datasets and Non-IID Settings.}
We evaluate our proposed framework on the following three benchmark datasets: Digit-5~\cite{zhou2020learning}, Office-10~\cite{gong2012geodesic}, and DomainNet dataset~\cite{peng2018synthetic}. \textbf{Digit-5}~\cite{zhou2020learning} is a benchmark dataset for digit recognition, including five benchmark datasets, namely SVHN, USPS, SynthDigits, MNIST-M, and MNIST. \textbf{Office-10}~\cite{gong2012geodesic} is a standard benchmark dataset consisting of four datasets, namely Amazon, Caltech, DSLR and WebCam. \textbf{DomainNet}~\cite{peng2019moment} is a large-scale dataset consisting of six datasets, namely Clipart, Info, Painting, Quickdraw, Real, and Sketch.

To mimic non-IID scenarios in a more general way, we investigate three different non-IID settings: (i)~\textit{Feature shift} non-IID: The datasets owned by clients have the same label distribution but different feature distributions. (ii)~\textit{Label shift} non-IID: The datasets owned by clients have the same feature distribution but different label distributions which is simulated by Dirichlet distribution with parameter $\alpha$~\cite{lin2020ensemble}. (iii)~\textit{Feature \& Label shift} non-IID: The datasets owned by clients are different in both label distribution and feature distribution, which is more common but challenging in real-world scenarios. Details about the non-IID settings can be found in Appendix~\ref{app:sec:ex_details}.

\textbf{Baselines and Implementation.}
We compare our proposed method with popular FL algorithms including FedAvg~\cite{mcmahan2016communication}, pFedMe~\cite{dinh2020personalized}, PerFedAvg~\cite{fallah2020personalized}, FedRep~\cite{collins2021exploiting}, FedProto~\cite{tan2021fedproto}, and Solo, i.e., training independently within each client. In feature shift and label shift non-IID setting, the number of clients is 5. In feature \& label shift setting, the number of clients is 5, 4, 6 for Digit-5, Office-10, and DomainNet, respectively. 

Similar to~\cite{liu2020universal,triantafillou2019meta}, we use the ResNet18~\cite{he2016deep} pre-trained on Quick Draw, Aircraft, and CU-Birds~\cite{triantafillou2019meta} as the backbones. For all baseline methods, the backbone module is followed by two fully connected layers, corresponding to projection network and classifier, respectively, whereas for our method, the backbone module is followed by only one fully connected layer as the projection network. The output dimension of each backbone and the projection network are 512 and 256, respectively. Note that for a fair comparison, all baselines use the same network architecture on top of the frozen backbones as FedPCL except their essential classifier.

We use a batch size of $32$, and an Adam~\cite{kingma2014adam} optimizer with weight decay 1e-4 and learning rate $0.001$. The default setting for local update epochs is $E=1$ and the temperature $\tau$ is $0.07$. We implement all the methods using PyTorch and conduct all experiments on one NVIDIA Tesla V100 GPU. Details about the model and each dataset can be found in Appendix~\ref{app:sec:ex_details}.

\vspace{-0.3cm}
\subsection{Performance Comparison} \label{sec:perf}
\vspace{-0.3cm}

\begin{wraptable}{r}{6cm}
    \vspace{-0.5cm}
    \small
	\caption{Test accuracy under the feature \& label shift non-IID setting for Office-10, under the label shift non-IID setting for DomainNet.}
	\label{tab:perf_ln}
    \centering
	\begin{tabular}{l|p{4.2em}<{\centering}p{4.2em}<{\centering}}
	\toprule
    \textbf{Method} & Office-10 & Domainnet \\
	\midrule
	FedAvg & 33.84(4.59) & 28.09(2.91) \\
	pFedMe & 30.00(1.41) & 32.65(0.72) \\
	Per-FedAvg & 26.04(1.46) & 34.64(0.54) \\
	FedRep & 37.24(1.54) & 48.82(0.55) \\
	FedProto & 34.54(2.65) & 44.48(0.58) \\
	Solo & 36.38(0.54) & 46.70(0.75) \\
	Ours & \textbf{41.40}(1.19) & \textbf{52.92}(3.47) \\
    \bottomrule
    \end{tabular}
	\vspace{-0.5cm}
\end{wraptable}

Table~\ref{tab:perf_smbb_li} and Table~\ref{tab:perf_ln} report the results of our method and baselines in mean (std) format over clients with three independent runs. 
Table~\ref{tab:perf_smbb_li} shows the results of two cases: (1) \textit{s}: a single backbone is available; (2) \textit{m}: multiple backbones are available. 

The results suggest that: (1)~compared with single backbone cases, multiple pre-trained backbones lead to higher test accuracy in most cases and about a $1\%-4\%$ test accuracy improvement for our method; (2)~apart from locally training each client (Solo), our method achieves relatively smaller deviation across different runs compared with most baselines, which demonstrates that FedPCL is able to fuse the representations in a more stable way; (3)~the number of communicated parameters in prototype-based method is much lower than that of the model parameter-based methods.

\begin{wrapfigure}{R}{0.55\textwidth}
  \vspace{-0.6cm}
  \setlength{\abovecaptionskip}{-0.1cm}
  \setlength{\belowcaptionskip}{-0.2cm}
  \begin{center}
    \subfigure[]{
		\begin{minipage}[b]{0.25\textwidth}
			\includegraphics[width=1\textwidth]{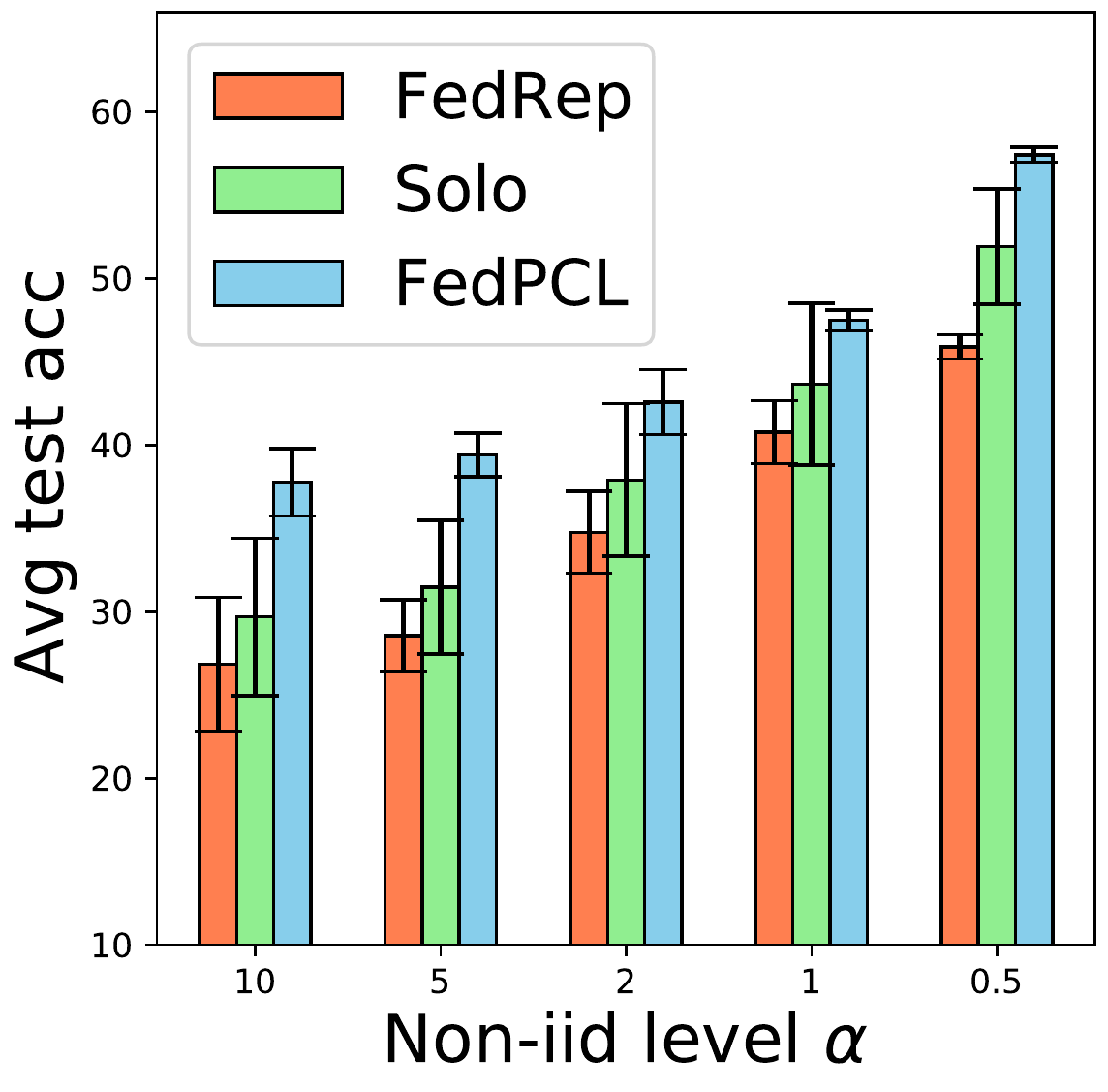}
		\end{minipage}
		\label{fig:varying_a}
	}	
	\subfigure[]{
		\begin{minipage}[b]{0.25\textwidth}
			\includegraphics[width=1\textwidth]{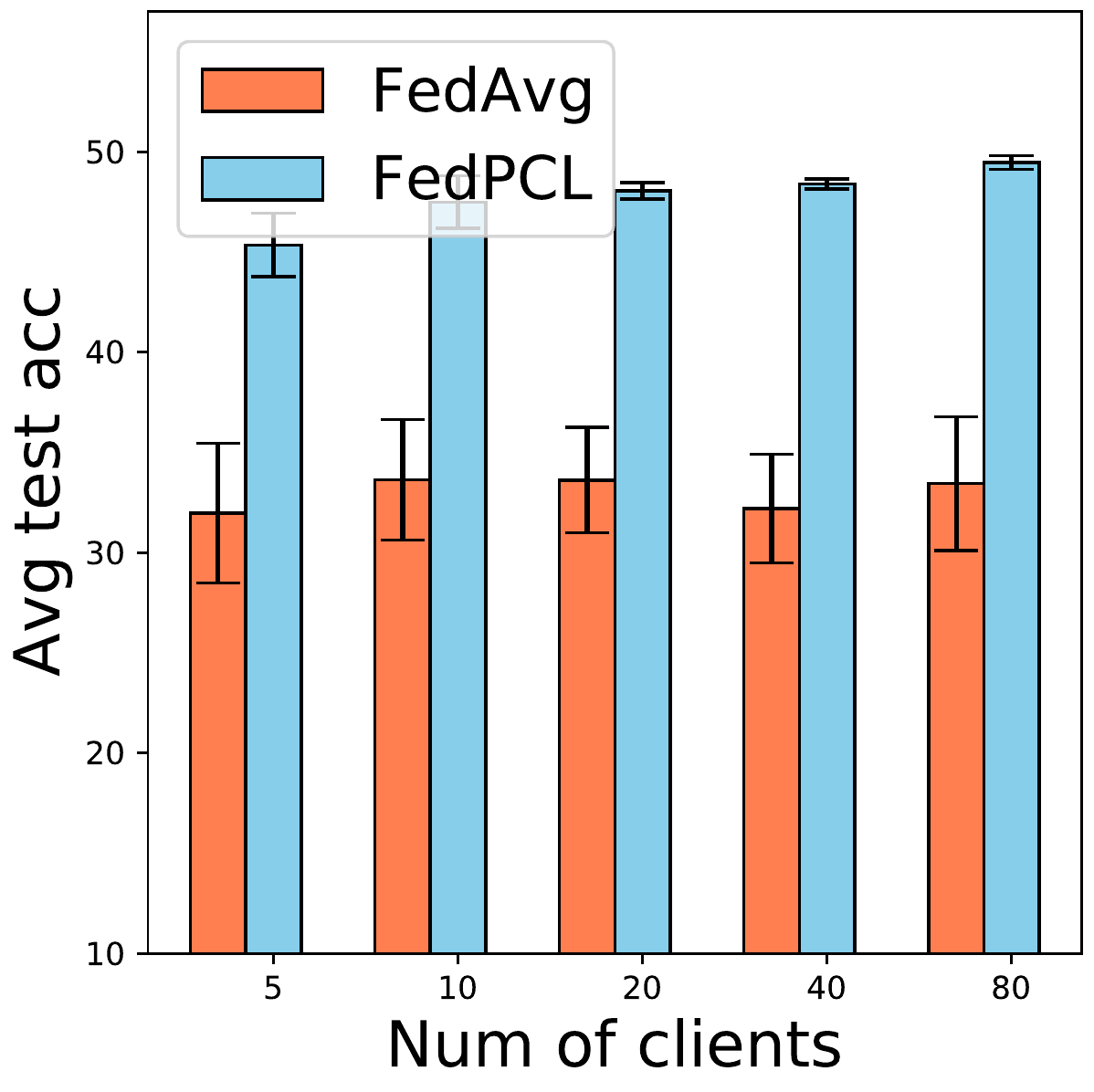}
		\end{minipage}
		\label{fig:varying_m}
	}
	\vspace{-0.2cm}
  \end{center}
  \caption{Test accuracy on Digit-5 under varying non-IID levels (\textit{left}) and varying numbers of clients (\textit{right}). Both are conducted in the label shift non-IID setting. In (a), non-IID level is controlled by $\alpha$, ranging from $0.5$ to $10$. In (b), the number of clients ($m$) ranges from $5$ to $80$.}
  \vspace{-0.3cm}
\end{wrapfigure}

\textbf{Robustness to varying levels of heterogeneity.}
In the label shift non-IID setting, we use Dirichlet distribution with parameter $\alpha>0$ to simulate the heterogeneity level over $10$ clients. As $\alpha$ becomes smaller, the label distribution becomes more heterogeneous. We evaluate our proposed FedPCL with FedRep and Solo by varying $\alpha \in \{0.5, 1, 2, 5, 10\}$ on 10 clients. Figure~\ref{fig:varying_a} suggests that FedPCL achieves the best performance for all values of $\alpha$. Moreover, FedPCL has the smallest deviation across several runs, which implies the stable convergence and robustness of FedPCL.

\textbf{Effect of varying numbers of participating clients.}
We also compare FedPCL with the vanilla FedAvg by varying the number of clients ($m$) from $5$ to $80$ as shown in Figure~\ref{fig:varying_m}. The non-IID level remains the same when more clients participate in the training procedure. The results show that the average test accuracy of FedPCL is higher than that of FedAvg by about $15\%$. Furthermore, the deviation of FedPCL across three runs is much lower than that of FedAvg, especially when $m$ becomes larger. This indicates the scalability of the lightweight framework and the ability of FedPCL to adapt to a large-scale FL system. 

\begin{wraptable}{r}{7cm}
    \vspace{-0.3cm}
    \small
    \caption{Integrating backbones with various architectures into the proposed framework. Experiments are implemented with FedPCL under feature \& label shift non-IID setting.}
	\label{tab:diff_architecture}
    \centering
    \begin{tabular}{p{10em}<{\centering}|p{4em}<{\centering}p{4em}<{\centering}}
	\toprule
	Backbone-Domain & Digit-5 & Office-10\\
	\midrule
    \multirow{3}*{\makecell[c]{[ResNet18-QuickDraw, \\ResNet18-Aircraft,\\ ResNet18-Birds]}} & \multirow{3}*{\makecell[c]{42.87(1.47)}} & \multirow{3}*{\makecell[c]{41.40(1.19)}} \\
    & & \\
    & & \\
    \midrule
    \multirow{3}*{\makecell[c]{[MLP-ImageNet, \\AlexNet-ImageNet,\\ VGG11-ImageNet]}} & \multirow{3}*{\makecell[c]{\textbf{55.80}(2.09)}} & \multirow{3}*{\makecell[c]{70.11(1.78)}} \\
    & & \\
    & & \\
    \midrule
    \multirow{3}*{\makecell[c]{[tiny-ViT-ImageNet, \\small-ViT-ImageNet,\\ base-ViT-ImageNet]}} & \multirow{3}*{\makecell[c]{40.96(2.87)}}	 & \multirow{3}*{\makecell[c]{\textbf{84.63}(2.57)}} \\
    & & \\
    & & \\
    \bottomrule
    \end{tabular}
    \vspace{-0.1cm}
\end{wraptable}

\vspace{-0.2cm}
\subsection{Integrating Backbones with Various Architectures}
\vspace{-0.2cm}
The above experiments use backbones with the same architecture but pre-trained on different source data, showing the capability of FedPCL on integrating knowledge from different source domains. In Table~\ref{tab:diff_architecture}, we also show that our proposed lightweight framework is capable of \rev{(i)~using pre-trained models with different architectures, i.e., a two-layer CNN, AlexNet~\cite{krizhevsky2012imagenet}, and VGGNet~\cite{simonyan2014very}; (ii)~integrating large-scale pre-trained models, i.e., ViT~\cite{dosovitskiy2020image,qu2022rethinking}, to enhance the performance without the need of huge computation resources to train or fine-tune them. Pre-trained models with different representation abilities should be adaptively leveraged for different tasks to achieve expected performance. They can be selected based on a small set of data before large-scale training.} 

\vspace{-0.2cm}
\subsection{Ablation Study}
\vspace{-0.2cm}
In this section, we present the results for various ablation experiments to test the effect of global/local prototypes and the contrastive loss. More results can be found in Appendix~\ref{app:sec:num_bb}.

\begin{wraptable}{r}{6.5cm}
    \vspace{-0.5cm}
    \small
    \caption{Comparison between the cases when different local losses are used for local training. Experiments are conducted on Digit-5 dataset under the feature \& label shift non-IID setting where the Dirichlet parameter $\alpha$ is 1, the number of clients is 5, and the number of pre-trained backbones is 3.}
	\label{tab:abl_loss}
    \centering
    \begin{tabular}{l|c}
	\toprule
	\textbf{Loss Type} & \textbf{Acc}  \\
	\midrule
    Cross Entropy & 32.98(3.44) \\
    Cross Entropy + ProtoDist \cite{tan2021fedproto} & 43.94(3.02) \\
    Supervised Contrastive \cite{khosla2020supervised} & 42.18(3.25) \\
    Ours &  \textbf{45.22(3.65)}\\
    \bottomrule
    \end{tabular}
    \vspace{-0.4cm}
\end{wraptable}

\textbf{Effect of the contrastive loss.}
We test the performance of the local loss function in different forms: (1)~Cross Entropy loss; (2)~Cross Entropy loss + ProtoDist term, which takes the distance between prototype and fused representation as an additional term to regularize the cross entropy loss; (3)~Supervised Contrastive loss that only uses local embedding for contrastive learning; (4)~Our prototype-wise contrastive loss that uses both global and local prototypes for local contrastive learning. As shown in Table~\ref{tab:abl_loss}, our prototype-wise supervised contrastive loss outperforms others.

\textbf{Effect of prototypes.}
To verify the effectiveness of different kinds of prototypes used for local training in our proposed FedPCL, we compare the following three cases: (i)~Only global prototypes computed at the server are used for local training; (ii)~Only local prototypes aggregated at the server are used for local training; (iii)~Both global and local prototypes are used for local training. All these three cases are implemented under three non-IID settings. As shown in Table~\ref{tab:abl_prototype}, without global or local prototypes being used for local supervised contrastive learning, the performance drops $0.3\%$-$2\%$, indicating that the knowledge conveyed by global prototypes and local prototypes can benefit the local learning framework from different perspectives.

\begin{table*}[htbp!]
    \vspace{-0.2cm}
    \small
    \caption{Comparison between the cases when only global prototypes are used for local training, only local prototypes from all clients are used for local training, and both of them are used for local training. Experiments are conducted on Digit-5 dataset under three non-IID settings where the Dirichlet parameter $\alpha$ is 1, the number of clients is 5, and the number of pre-trained backbones is 3.}
	\label{tab:abl_prototype}
    \centering
    \begin{tabular}{c|ccc}
	\toprule
	\textbf{Prototypes} & Feature shift non-IID & Label shift non-IID & Feature \& Label shift non-IID\\
	\midrule
    {Global only} & 54.69(0.14) & 44.75(1.77) & 43.79(4.09)\\
    {Local only} & 55.01(0.10) & 44.52(1.73) & 43.08(3.87)\\
    {Global and local} & \textbf{55.34}(0.34) & \textbf{45.35}(1.58) & \textbf{45.22}(3.65) \\
    \bottomrule
    \end{tabular}
    \vspace{-0.1cm}
\end{table*}

\vspace{-0.2cm}
\subsection{Visualizing the Fusing Results of FedPCL}
\vspace{-0.2cm}

\begin{wrapfigure}{r}{0.58\textwidth}
  \vspace{-0.6cm}
  \setlength{\abovecaptionskip}{-0.2cm}
  \begin{center}
	\subfigure[Digit-5]{
		\begin{minipage}[b]{0.255\textwidth}
			\includegraphics[width=1\textwidth]{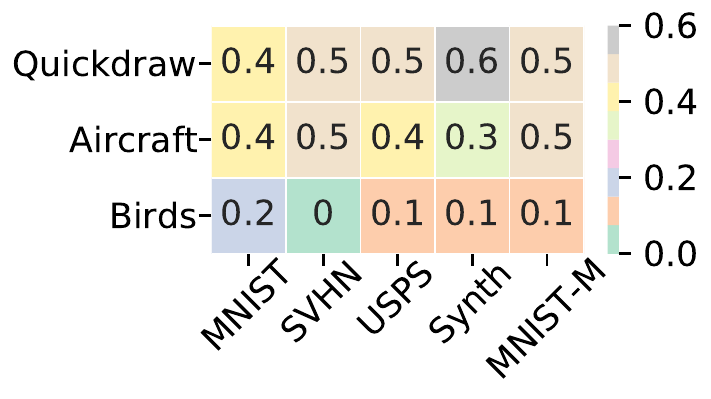}
		\end{minipage}
		\label{fig:digit_weight}
	}	
	\subfigure[Office-10]{
		\begin{minipage}[b]{0.245\textwidth}
			\includegraphics[width=1\textwidth]{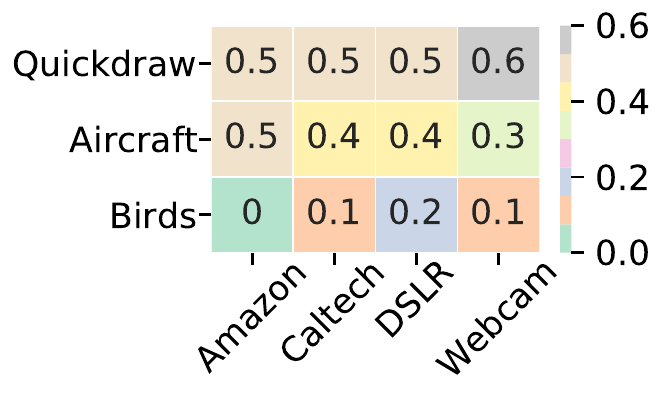}
		\end{minipage}
		\label{fig:office_weight}
	}
  \end{center}
  \caption{Similarity scores generated on two datasets: (a)~Digit-5 and (b)~Office-10. Rows correspond to the pre-trained backbones $r_k(\textbf{x})$ and columns correspond to different local datasets with feature shift.}
  \label{fig:weight_vis}
  \vspace{-0.4cm}
\end{wrapfigure}

To better understand how FedPCL fuses the representation output by backbones, we visualize the normalized similarity scores under the feature shift non-IID setting. 
The heatmap in Figure~\ref{fig:weight_vis} summarizes the values of the normalized similarity scores, computed by the inner product between local prototypes in the single backbone case and the multi-backbone case. For example, in Figure~\ref{fig:digit_weight}, the element $0.5$ on row $1$ and column $2$ represents the similarity score between the local prototype computed when only one backbone pre-trained in Quickdraw is available and the local prototype computed when three backbones pre-trained in Quickdraw, Aircraft, and Birds are available.

We find that the backbone pre-trained on Quickdraw contributes more to the fused prototype, while the backbone pre-trained on Birds contributes the least. Clients who own a specific dataset show different levels of utilization when fusing the representations from various backbones.

\vspace{-0.2cm}
\subsection{Incorporating with Privacy-Preserving Techniques} \label{sec:exp_privacy}
\vspace{-0.2cm}
To further eliminate concerns about privacy leakage, we also incorporate FedPCL with privacy-preserving techniques to observe the variation in performance. Concretely, we add various random noise to the prototypes to be communicated and the original images, respectively. The results show that the performance of FedPCL remains high after noise injection. Details are given in Appendix~\ref{app_sec:privacy}. \looseness-1

\vspace{-0.3cm}
\section{Conclusion} \label{sec:conclusion}
\vspace{-0.3cm}
To address the problems on excessive computation and communication demands in current federated learning frameworks, we propose a lightweight framework that leverages multiple neural networks as fixed pre-trained backbones to replace the learnable feature extractor. To customize the general representations generated by these backbones for each client, class-wise prototypes are shared across the clients and the server. To efficiently extract the shared knowledge from the prototypes, we develop FedPCL algorithm that uses contrastive learning at the client-side during the local update. Extensive experiments are conducted to show the superiority of FedPCL under the proposed lightweight framework. \looseness-1

This research has great potential to bring existing FL frameworks to a new paradigm and provide more options on local neural network architectures. 
However, we mainly focus on the vision task in this work. Methods for language tasks can be further explored in future studies. Considering the evident motivation and effectiveness of the lightweight framework and FedPCL, we believe this work is intriguing for future studies.

\clearpage
{\small
\bibliographystyle{unsrt}
\bibliography{reference}}

\clearpage
\appendix
\section{Experimental Details and Additional Results} \label{appendix_a}
\subsection{Experimental Details} \label{app:sec:ex_details}
\subsubsection{Non-IID Settings.}
\begin{figure*}[!htbp]
	\centering
	\vspace{-0.4cm}
	\subfigure[Feature shift non-IID$\hspace{3em}$]{
		\begin{minipage}[b]{0.31\textwidth}
			\includegraphics[width=1\textwidth]{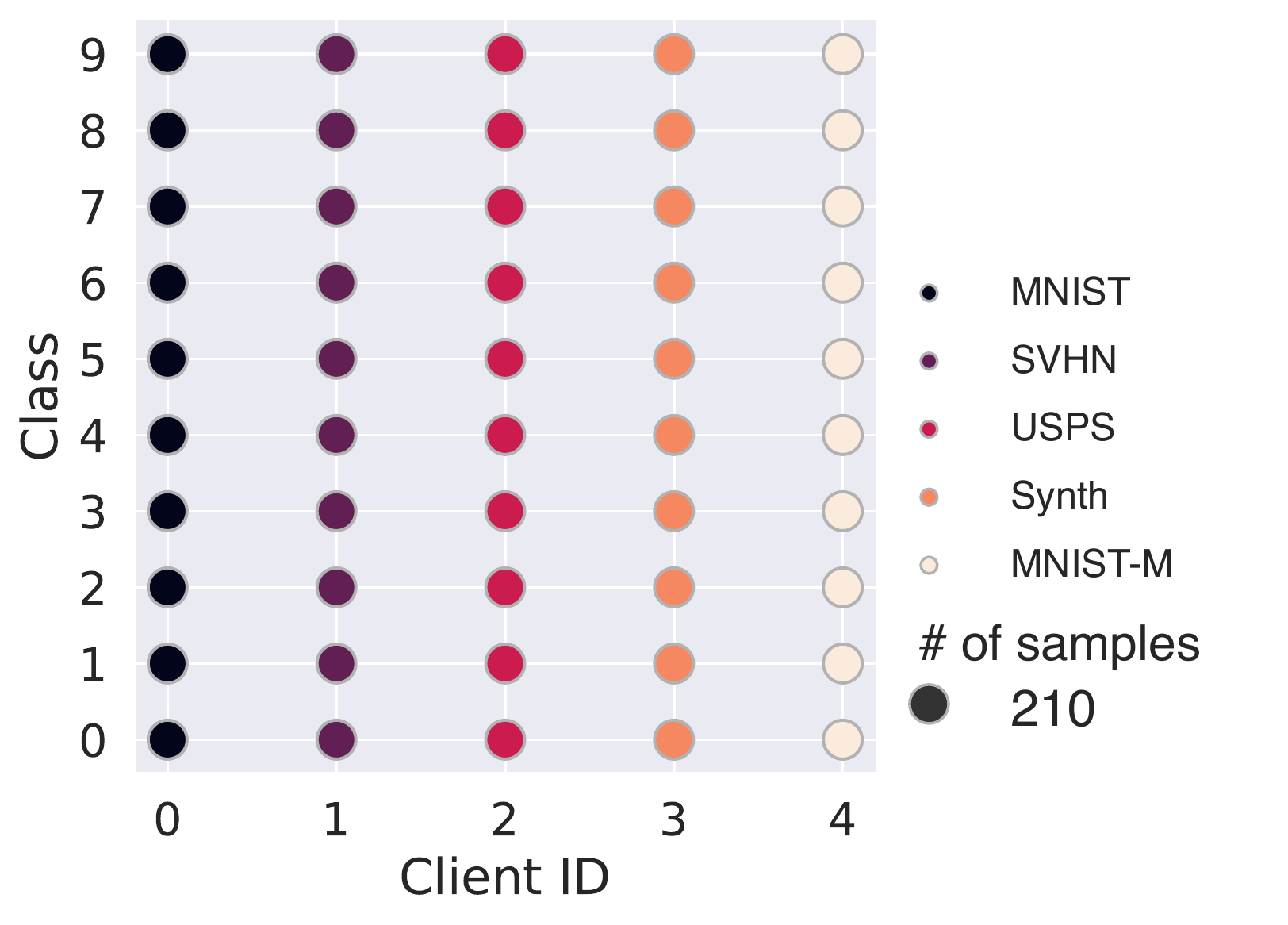}
		\end{minipage}
		\label{app:fig:noniid_filn}
	}	
	\subfigure[Label shift non-IID$\hspace{3em}$]{
		\begin{minipage}[b]{0.31\textwidth}
			\includegraphics[width=1\textwidth]{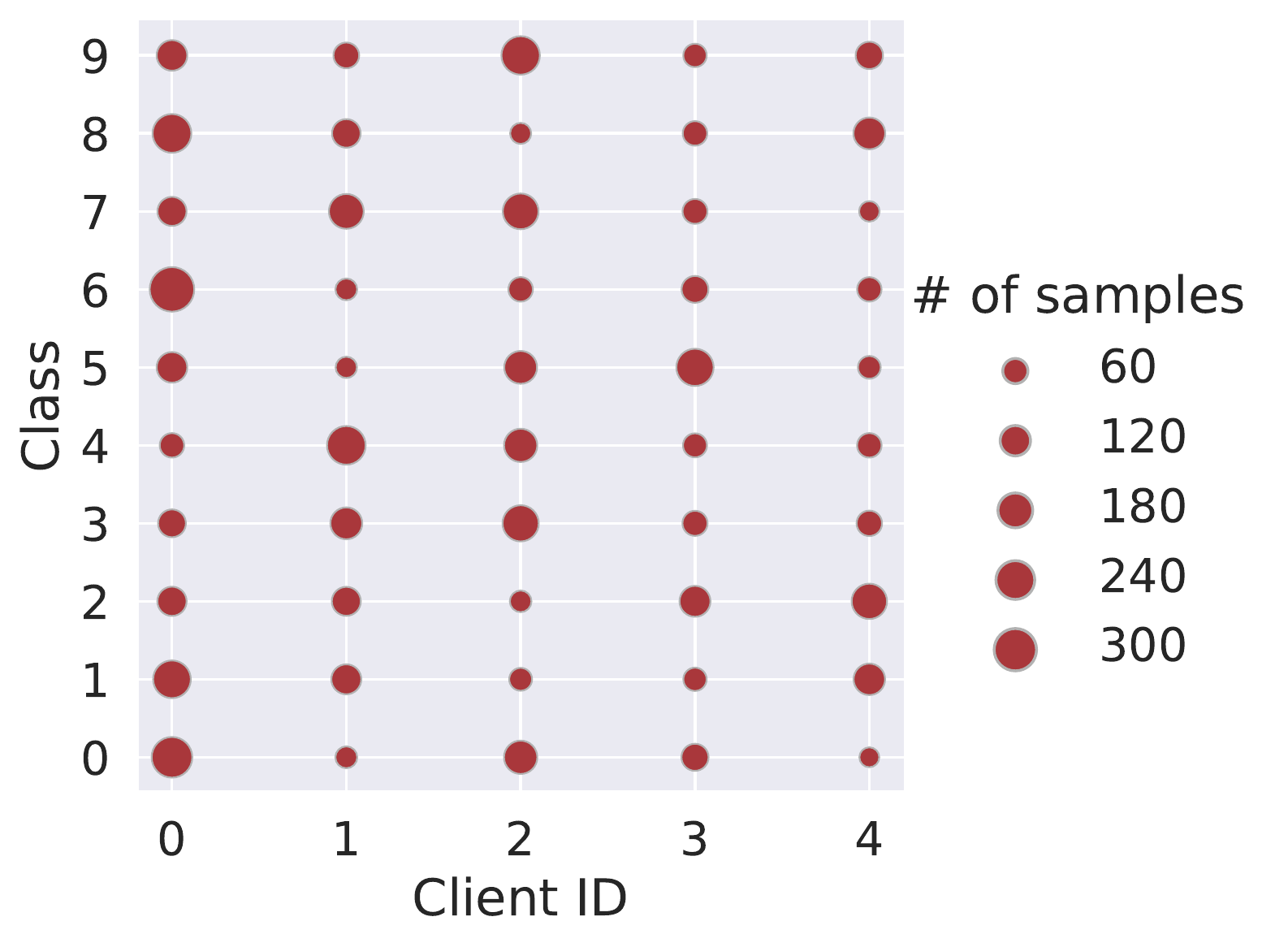}
		\end{minipage}
		\label{app:fig:noniid_fnli}
	}
	\subfigure[Feature \& label shift non-IID$\hspace{1em}$]{
		\begin{minipage}[b]{0.31\textwidth}
			\includegraphics[width=1\textwidth]{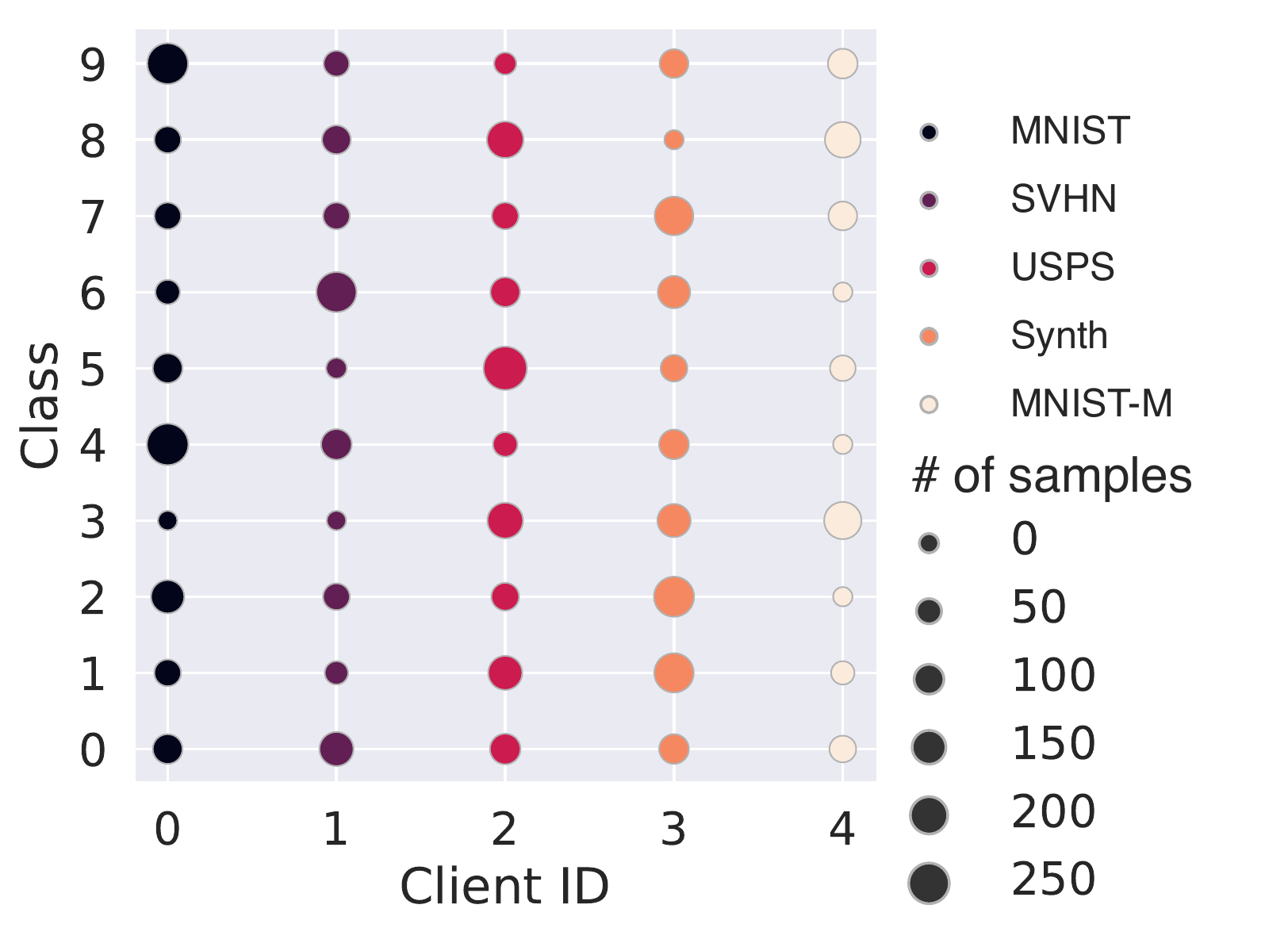}
		\end{minipage}
		\label{app:fig:noniid_fnln}
	}
	\caption{Illustration of three non-IID settings on Digit-5 dataset. Each dot represents a set of samples within a certain class allocated to a client. The feature shift non-IID is indicated by dot colors while the label shift non-IID ($\alpha=1$) is indicated by dot sizes.}
	\vspace{-0.2cm}
	\label{app:fig:non-iid-setting}
\end{figure*}

To mimic non-IID scenarios in a more general way, we investigate three different non-IID settings as follows and visualize them in Figure~\ref{app:fig:non-iid-setting}.

\begin{itemize}[leftmargin=20pt]
    \item \textit{Feature shift} non-IID: The datasets owned by clients have the same label distribution but different feature distributions. The number of classes and the number of samples per class are the same across clients. 
    \item \textit{Label shift} non-IID: The datasets owned by clients have the same feature distribution but different label distributions. Similar to existing works~\cite{lin2020ensemble,wang2020federated}, we use Dirichlet distribution with parameter $\alpha$ to allocate examples for this kind of non-IID setting.
    \item \textit{Feature \& Label shift} non-IID: The datasets owned by clients are different in both label distribution and feature distribution, which is more common but challenging in real-world scenarios. 
\end{itemize}

\begin{figure*}[htb]
	\centering
	\subfigure[Digit-5]{
		\begin{minipage}[b]{0.25\textwidth}
			\includegraphics[width=1\textwidth]{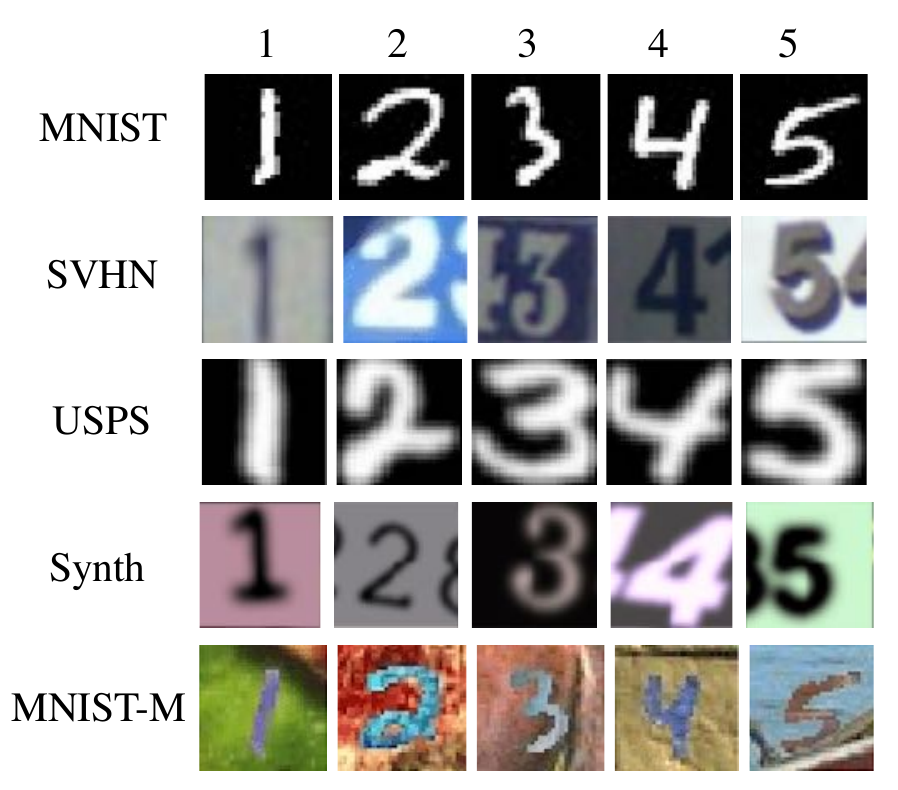}
		\end{minipage}
		\label{fig:digit_raw}
	}
	\subfigure[Office-10]{
		\begin{minipage}[b]{0.25\textwidth}
			\includegraphics[width=1\textwidth]{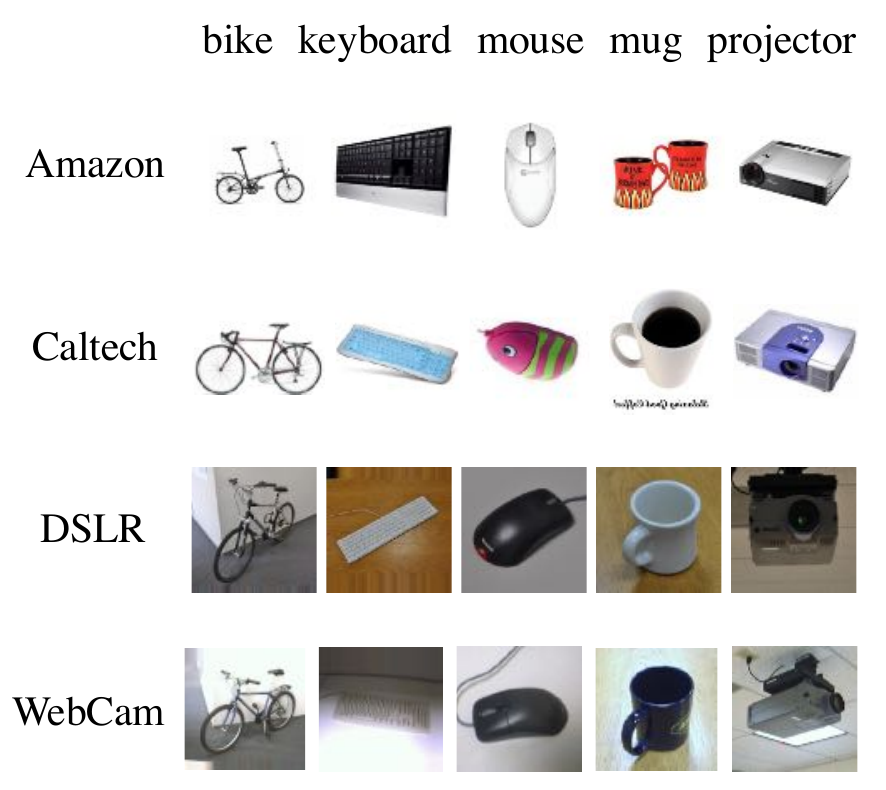}
		\end{minipage}
		\label{fig:office_raw}
	}
	\subfigure[DomainNet]{
		\begin{minipage}[b]{0.25\textwidth}
			\includegraphics[width=1\textwidth]{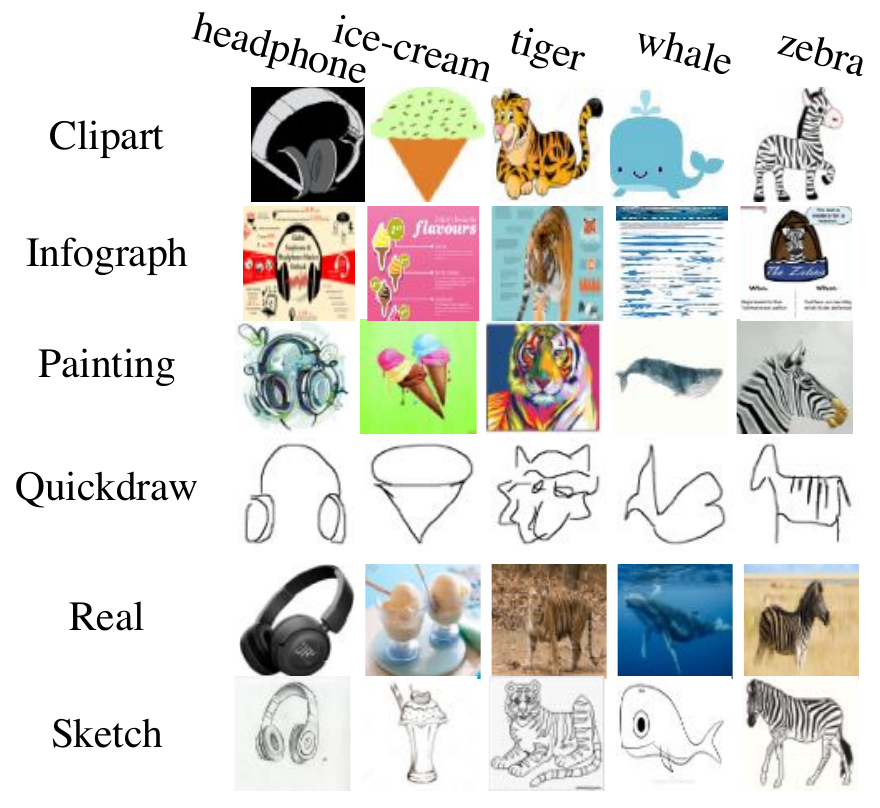}
		\end{minipage}
		\label{fig:domain_raw}
	}
	\caption{Examples of raw instances from three datasets: Digit-5 (\textit{left}), Office-10 (\textit{middle}), and DomainNet (\textit{right}). We present five classes for each dataset to show the feature shift across their sub-datasets.}
	\label{fig:raw_instance}
\end{figure*}

\subsubsection{Visualizatin of Raw Samples.} Some examples of raw instances can be found in Figure~\ref{fig:raw_instance}.

\subsubsection{Model Architecture.} 
For the single backbone cases, we use ResNet18 pre-trained on Quickdraw as the backbone. For the multiple backbone cases, we use three pre-trained ResNet18 as the backbones. They are pre-trained on Quick Draw~\cite{jongejan2016quick}, Aircraft~\cite{maji2013fine}, and CU-Birds~\cite{liu2020universal} public dataset, respectively. The model architecture following the backbone module is shown in Table~\ref{tab:model_setting}. 

\begin{table*}[htbp]
    \caption{The model architecture with learnable parameters for each client. FC refers to fully connected layer and BN refers to the BatchNormalization layer. $K$ refers to the number of available pre-trained backbones, which is $3$ in our experiments.}
	\label{tab:model_setting}
    \centering
    \begin{tabular}{c|c}
	\toprule
	Layer & Details \\
	\midrule
	1 & \makecell[c]{FC(512$\times K$, 256), ReLU, BN(256)} \\
	\midrule
	2 & FC(256, 10)\\
	\bottomrule
    \end{tabular}
\end{table*}

\subsubsection{Training Details.} We provide the detailed settings for the experiments conducted in Section~\ref{sec:perf}.

\rev{For data splitting, we use a portion of data as training samples ($\sim10\%$) and the rest as test samples. We first take out a $20\%$ subset of the training set for validation and return the validation set back to the training set and retrain the model after selecting the optimal hyperparameters.}

Table~\ref{tab:fnli_setting},~\ref{tab:filn_setting},~\ref{tab:fnln_setting} show the data partitioning details in feature shift non-IID, label shift non-IID, and feature \& label shift non-IID, respectively. We run each algorithm till the convergence of its loss.

\begin{table*}[htbp!]
    \caption{Detailed statistics for three benchmark datasets in feature shift non-IID, label IID setting (Table~\ref{tab:perf_smbb_li}).}
	\label{tab:fnli_setting}
    \centering
    \begin{tabular}{c|ccccc}
	\toprule
	Datasets & MNIST & SVHN & USPS & SynthDigits & MNIST-M \\
	\midrule
	\# of clients & 1 & 1 & 1 & 1 & 1 \\
	\# of classes per client & 10 & 10 & 10 & 10 & 10 \\
	\# of samples per class  & 10 & 10 & 10 & 10 & 10\\
	\bottomrule
    \end{tabular}
\end{table*}

\begin{table*}[htbp!]
    \caption{Detailed statistics for three benchmark datasets in label shift non-IID, feature IID setting (Table~\ref{app:tab:perf_ln}).}
	\label{tab:filn_setting}
    \centering
    \begin{tabular}{c|c|c|c}
	\toprule
	{Datasets} & MNIST-M & Caltech & Real \\
	\midrule
	{\# of clients} & 5 & 5 & 5 \\
	\# of samples per client & [152,92,112,72,70] & [76,79,111,70,112] & [153,95,111,55,84] \\
	\bottomrule
    \end{tabular}
\end{table*}

\begin{table*}[htbp!]
    \caption{Detailed statistics for three benchmark datasets in feature \& label shift non-IID setting (Table~\ref{app:tab:perf_ln}).}
	\label{tab:fnln_setting}
    \centering
    \begin{tabular}{c|c|c|c}
	\toprule
	{Datasets} & Digit-5 & Office-10 & DomainNet \\
	\midrule
	{\# of clients} & 5 & 4 & 6 \\
	\# of samples per client & [100,75,112,120,65] & [89,108,58,120] & [137,230,270,204,175,152] \\
	\bottomrule
    \end{tabular}
\end{table*}

\rev{For hyperparameter tuning, we use grid search to find the optimal hyperparameters including learning rate, weight decay, and the output dimension of each backbone. Concretely, the grid search is carried out in the following search space:
\begin{itemize}
    \item learning rate: $\{0.1,0.01,0.001,0.0001\}$
    \item weight decay: $\{$1e-3, 5e-4, 1e-4, 5e-5, 1e-5$\}$
    \item output dimension of each backbone: $\{128, 256, 512, 1024\}$
\end{itemize}}

\subsection{Additional Experiments}
\subsubsection{Performance under Three Non-IID Settings.} \label{app:sec:non-iid}
Table~\ref{app:tab:perf_ln} shows the performance of FedPCL and baseline methods under label shift non-IID scenarios. 

\begin{table*}[bhtp!]
    \small
	\caption{Test accuracy under the (1)~label shift non-IID setting, (2)~feature \& label shift non-IID setting. For the former (feature IID, label non-IID), we use MNIST-M, Caltech, and Real as the datasets of all clients, respectively. For the latter (feature non-IID, label non-IID), we use Digit-5, Office-10, and DomainNet as the datasets, respectively.\looseness-1}
	\label{app:tab:perf_ln}
    \centering
	\begin{tabular}{ccp{4.2em}<{\centering}p{4.2em}<{\centering}p{4.2em}<{\centering}|p{4.2em}<{\centering}p{4.2em}<{\centering}p{4.2em}<{\centering}}
	\toprule
	\multirow{2}*{\makecell[l]{\textbf{Feature}}} & \multirow{2}*{\textbf{Method}} & \multicolumn{3}{c} {\textbf{Single backbone}} & \multicolumn{3}{c} {\textbf{Multi-backbone}} \\
	\cmidrule{3-8}
	& & Digit-5 & Office-10 & Domainnet & Digit-5 & Office-10 & DomainNet \\
	\midrule
    \multirow{7}*{\makecell[c]{\textbf{IID}}}
    & FedAvg & 30.42(5.34) & 28.84(1.01) & 25.78(1.48) & 31.97(3.48) & 23.60(1.72) & 28.09(2.91) \\
    & pFedMe & 28.14(1.26) & 22.53(5.28) & 29.43(1.30) & 32.82(1.74) & 25.73(3.26) & 32.65(0.72) \\
    & Per-FedAvg & 27.22(7.11) & 36.74(3.96) & 31.37(3.35) & 30.95(2.94) & 25.10(0.55) & 34.64(0.54) \\
    & FedRep & 34.14(7.37) & 36.35(0.53) & \textbf{41.95}(3.35) & 39.27(1.35) & 37.95(0.91) & 48.82(0.55) \\
    & FedProto & 38.02(3.89) & 39.04(0.13) & 34.41(0.74) & 42.17(3.23) & 40.78(0.93) & 44.48(0.58) \\
    & Solo & 36.40(3.71) & 37.82(1.79) & 39.06(3.27) & 41.33(2.22) & 39.59(2.49) & 46.70(0.75) \\
    & Ours & \textbf{40.88}(2.09) & \textbf{40.76}(0.21) & 39.63(3.31) & \textbf{45.35}(1.58) & \textbf{42.13}(0.77) & \textbf{52.92}(3.47) \\
    \midrule
    \multirow{7}*{\makecell[c]{\textbf{Non-}\\\textbf{IID}}}
    & FedAvg & 31.03(4.29) & 25.67(1.79) & 	16.40(1.25) & 32.98(3.44) & 33.84(4.59) & 19.25(2.03) \\
    & pFedMe & 25.13(8.31) & 22.65(1.10) & 16.56(2.37) & 28.10(8.80) & 30.00(1.41) & 18.46(2.04)	\\
    & Per-FedAvg & 28.94(0.60) & 25.87(0.19) & 18.68(1.47) & 31.95(1.81) & 26.04(1.46) & 20.96(1.83) \\
    & FedRep & 34.16(3.40) & 32.50(2.86) & 19.64(1.53) & 39.28(2.44) & 37.24(1.54) & 30.28(1.01) \\
    & FedProto & 41.49(2.75) & 37.47(1.27) & 19.37(0.14) & 43.94(3.02) & 34.54(2.65) & \textbf{}29.45(0.39) \\
    & Solo & 35.56(3.16) & 35.54(1.05) & 17.84(0.87) & 38.35(2.55) & 36.38(0.54) & 27.15(0.52) \\
    & Ours & \textbf{42.87}(1.47) & \textbf{37.64}(1.99) & \textbf{24.90}(1.61) & \textbf{45.22}(3.65) & \textbf{41.40}(1.19) & \textbf{35.23}(0.29)\\
    \bottomrule
    \end{tabular}
	\vspace{-0.2cm}
\end{table*}

\subsubsection{Fairness across Clients.} \label{ap_sec:fairness}
Following the metrics in~\cite{li2019fair}, we verify the advantage of FedPCL in fairness and report the results over $40$ and $80$ clients in Table~\ref{app:tab:fair_m_40} and Table~\ref{app:tab:fair_m_80}, respectively. We list the average, the worst $10\%$, the worst $20\%$, the worst $40\%$, the best $10\%$ of test accuracy over all clients across three runs in Digit-5 dataset. We also report the variance of the accuracy distribution across clients (the smaller, the fairer)~\cite{li2019fair}. The results demonstrate that in such a multi-backbone scenario, it is easier to obtain a fair solution using prototype-based communication rather than model parameter/gradient-based communication, because the information conveyed by prototypes is more local data distribution-independent while the information conveyed by model parameters tends to be affected by local data distribution. Detailed experimental results on fairness can be found in Table~\ref{app:tab:fair_m_40} and~\ref{app:tab:fair_m_80}.

\begin{table*}[htbp!]
    \centering
	\caption{The average, the worst, the best, and the variance of the test accuracy of 40 clients on Digit-5.}
	\label{app:tab:fair_m_40}
	\begin{tabular}{ccccccc}
	\toprule
	\textbf{Method} & Average & Worst 10\% & Worst 20\% & Worst 40\% & Best 10\% & Variance \\
	\midrule
	FedAvg & 32.20(2.17) & 16.73(1.73) & 20.31(2.21) & 25.01(0.71) & 54.60(3.33) & 11.16(0.35) \\
	Ours & 48.39(0.25) & 35.35(2.63) & 37.58(3.07) & 40.82(3.16) & 62.72(1.33) & 8.45(0.06) \\
	\bottomrule
    \end{tabular}
\end{table*}

\begin{table}[htbp!]
    \centering
	\caption{The average, the worst, the best, and the variance of the test accuracy of 80 clients on Digit-5.}
	\label{app:tab:fair_m_80}
	\begin{tabular}{ccccccc}
	\toprule
	\textbf{Method} & Average & Worst 10\% & Worst 20\% & Worst 40\% & Best 10\% & Variance \\
	\midrule
	FedAvg & 33.44(3.34) & 12.34(2.67) & 15.53(2.90) & 19.53(2.40) & 56.87(2.74) & 13.20(1.35) \\
	Ours & 49.46(0.34) & 30.39(4.57) & 34.37(3.40) & 39.52(2.66) & 66.46(2.38) & 10.60(0.25) \\
	\bottomrule
    \end{tabular}
\end{table}

\subsubsection{Effect of the Number of Backbones.} \label{app:sec:num_bb}
The number of pre-trained backbones can be adjusted for a specific task correspondingly. To study its effect, we compare the performance and parameter size when different numbers of fixed backbones are used. The results in Table~\ref{tab:abl_bb} indicate that more pre-trained backbones can lead to better performance but consume more computing resources and memory space.

\begin{table*}[htbp!]
    \small
    \caption{Effect of the number of pre-trained backbones. Experiments are conducted on Digit-5 dataset under the feature \& label shift non-IID setting where the Dirichlet parameter $\alpha$ is 1, and the number of clients is 5.}
	\label{tab:abl_bb}
    \centering
    \begin{tabular}{cccc}
	\toprule
	\# of Backbones & \# of Training Params. & \# of Fixed Params. & Acc \\
	\midrule
    1 & 133,632 & 11M & 42.87(1.47) \\
    \midrule
    2 & 264,704 & 22M & 44.49(1.75)\\
    \midrule
    3 & 395,776 & 33M & 45.22(3.65) \\
    \bottomrule
    \end{tabular}
\end{table*}

\subsubsection{Privacy Protection.}\label{app_sec:privacy}
We incorporate FedPCL with privacy-preserving techniques to observe its variation in performance. Concretely, we add random noise of various distributions into the communicated prototypes and the original images, respectively. Table~\ref{app_tab:dp} shows that the performance of FedPCL remains high after injecting noise to the prototypes. Figure~\ref{app_fig:dp} visualizes an original image of a bike from Office-10 dataset and what it looks like after the noise injection operation. It is hard to tell the objects in the right column of Figure~\ref{app_fig:dp_noise_2}, but the accuracy only drops about $2\%$, compared to vanilla FedPCL. In conclusion, FedPCL has the potential to combine with privacy-preserving techniques without an obvious decrease in performance.

\begin{table*}[htbp!]
    \small
    \caption{The performance of FedPCL on Office-10 dataset after incorporating privacy-preserving techniques. Here, we consider using multiple backbones under the label shift non-IID setting with $\alpha=1$ and $m=5$. $s$ represents the scale parameter for the noise distribution generation and $p \in (0,1)$ represents the perturbation coefficient of the noise. }
	\label{app_tab:dp}
    \centering
    \begin{tabular}{c|c|c|c}
	\toprule
	{Methods} & Add Noise to & Noise Type & Acc(Std) \\
	\midrule
	FedRep & / & / & 37.95(0.91) \\
	\midrule
	FedPCL & / & / & 42.13(0.77) \\
	\midrule
    \multirow{8}*{FedPCL} & \multirow{4}*{Prototype} & \textit{Laplace} ($s=0.05, p=0.1$) & 39.93(3.48) \\
    &  & \textit{Gaussian} ($s=0.05, p=0.1$) & 40.04(2.09) \\
    &  & \textit{Laplace} ($s=0.05, p=0.2$) & 40.24(3.01) \\
    &  & \textit{Gaussian} ($s=0.05, p=0.2$) & 41.83(3.57) \\
    \cmidrule{2-4}
    & \multirow{4}*{Image} & \textit{Laplace} ($s=0.2, p=0.1$) & 38.29(2.87) \\
    & & \textit{Gaussian} ($s=0.2, p=0.1$) & 41.10(0.57) \\
    & & \textit{Laplace} ($s=0.2, p=0.2$) & 36.93(2.33) \\
    & & \textit{Gaussian} ($s=0.2, p=0.2$) & 40.14(0.78) \\
	\bottomrule
    \end{tabular}
\end{table*}

\begin{figure*}[htbp!]
	\centering
	\subfigure[Original]{
		\begin{minipage}[b]{0.14\textwidth}
			\includegraphics[width=1\textwidth]{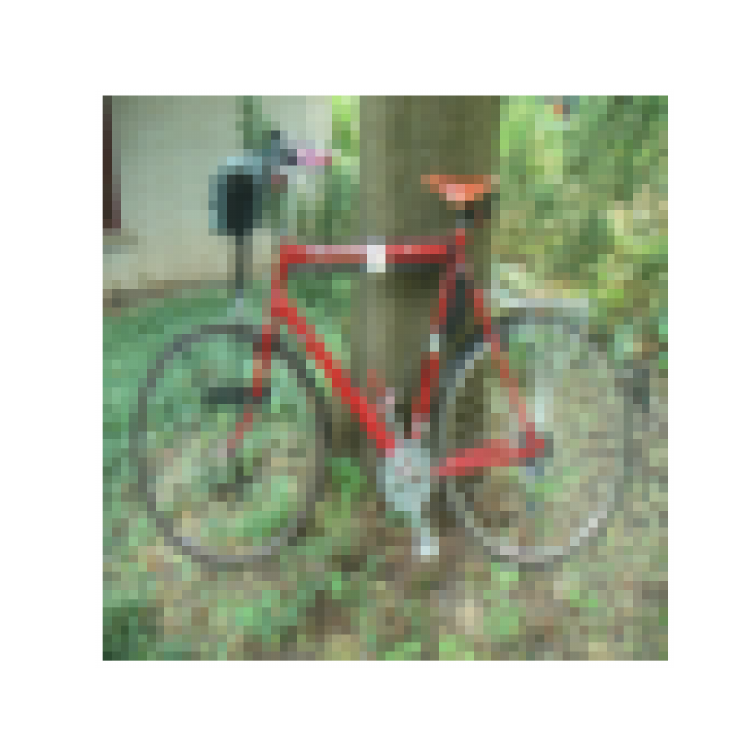}
		\end{minipage}
		\label{app_fig:original}
	}
	\subfigure[$p=0.1$]{
		\begin{minipage}[b]{0.34\textwidth}
			\includegraphics[width=1\textwidth]{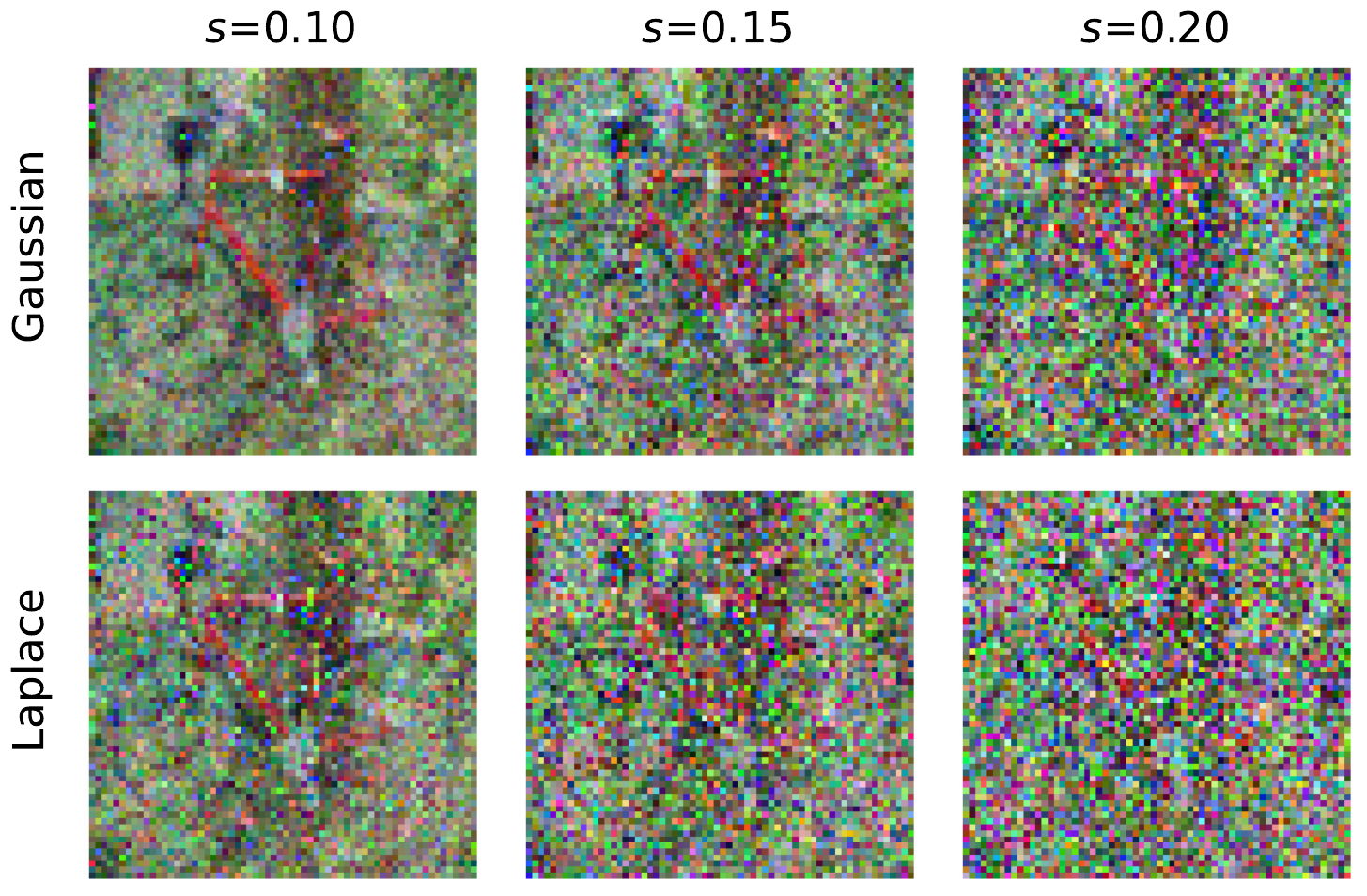}
		\end{minipage}
		\label{app_fig:dp_noise_1}
	}
	\subfigure[$p=0.2$]{
		\begin{minipage}[b]{0.34\textwidth}
			\includegraphics[width=1\textwidth]{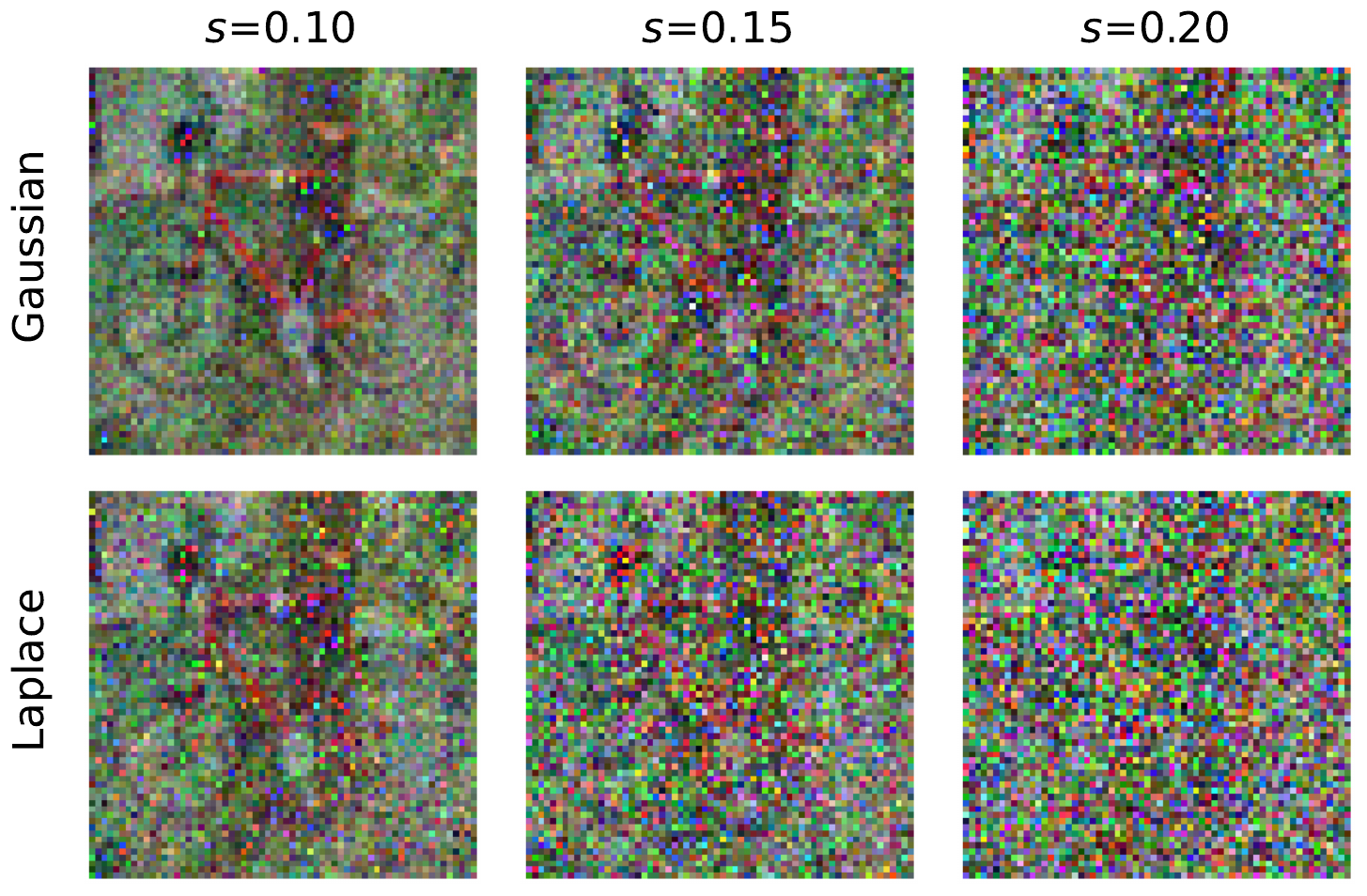}
		\end{minipage}
		\label{app_fig:dp_noise_2}
	}
	\caption{Visualization of (a) an original image of a bike from Office-10 dataset; (b-c) applying noise injection privacy-preserving techniques to the original image. Specifically, given an original image $x$ and a perturbation coefficient $p \in (0,1)$, $\widetilde{x}=(1-p) x+e$ is the image used for training. Here, we provide the results of two kinds of random noise, Gaussian (upper row) and Laplace noise (lower row). The value of $p$ in Figure~\ref{app_fig:dp_noise_1} and Figure~\ref{app_fig:dp_noise_2} is $0.1$ and $0.2$, respectively.}
	\label{app_fig:dp}
\end{figure*}

\rev{
\section{Discussion} \label{app:discussion}
\subsection{Applying FedPCL to a Wider Range of Scenarios} \label{app:sec:wide}
Since there are many off-the-shelf foundation models available nowadays and there is a trend to use them in CV and NLP communities, it is a promising research direction to integrate those pre-trained foundation models into an FL framework. Compared with training from scratch, using pre-trained models can save more computation and communication resources. Also, it is an interesting direction to apply FedPCL to some noisy-label federated learning cases to prevent model degradation caused by incorrect labels when training from scratch~\cite{han2018co,wang2022seminll,wang2022fednoil}.
}

\rev{
Our proposed framework is limited to the cases where pre-trained models are available. This is mainly due to the fact that most types of data, e.g., image, text, graph, have corresponding pre-trained models nowadays. For the cases without pre-trained models, using multiple fixed encoders can be an alternative solution, which is another interesting problem and can be explored in the future.
}

\rev{
We do think incorporating pre-trained models into existing learning frameworks is a promising trend in deep learning, especially when models are becoming larger and larger in scale and hard to train from scratch. So far, the idea to utilize pre-trained foundation models has been proposed for CV and NLP tasks and achieved certain improvement~\cite{han2021pre,chen2021pre,you2021logme}. 
}

\rev{
\subsection{Comparison to Related Works} \label{app:sec:comparison}
This paper has provided novel contributions in terms of (1) integrating pre-trained models into federated learning, (2) proposing a novel algorithm FedPCL which allows clients to share knowledge via prototype-based local contrastive learning. 
}

\rev{
Although prototypical learning and contrastive learning exist in prior work, they are still based on the learnable parameter aggregation scheme~\cite{li2021model,mu2021fedproc,michieli2021prototype} or just use prototypes/contrastive learning to regularize the original local training~\cite{tan2021fedproto}. Instead of synchronizing learnable parameters, our method allows each client to keep their own local parameters while extracting shared knowledge only by contrastive learning. Prototype is used as the information carrier to achieve that.
}

\rev{
Some state-of-the-art PFL methods do not work well under our proposed lightweight framework. Their performance is mainly due to the following two reasons.
\begin{itemize}
    \item Most of these baselines are designed based on the training-from-scratch framework where there are a large number of learnable parameters. The proposed new setting where most parameters are fixed is not friendly to some PFL methods.
    \item When there are feature/label shift non-IID across clients, the local performance might be further deteriorated after parameter aggregation. The deterioration can be more significant when only a small number of parameters are locally trained. Since most PFL methods are still based on the parameter aggregation scheme, their performances are inevitably affected.
\end{itemize}
}

\rev{
Performance might be enhanced by decreasing the number of shared parameters, optimizing local training schemes, and introducing fine-tuning procedures during local training, etc.
}

\rev{
There is also a branch of FL studying fine-tuning techniques for FL~\cite{cheng2021federated,zhang2022fine}. These works mainly focus on how to adjust the local/global model to improve its representation ability on biased/generic data distribution, while our work utilizes fixed pre-trained foundation models and focuses on improving the fusing ability.
}

\section{Proof of Generalization Bound} \label{sec:proof}
\begin{theorem} \label{th:app_th1}
{\rm (Generalization Bound of Alg.~\ref{alg:alg1}).} 
Consider an FL system with $m$ clients. Let $\mathcal{D}_1, \mathcal{D}_2, \cdots, \mathcal{D}_m$ be the true data distribution and $\widehat{\mathcal{D}}_1, \widehat{\mathcal{D}}_2,\cdots, \widehat{\mathcal{D}}_m$ be the empirical data distribution. Denote the projector head $h$ as the hypothesis from $\mathcal{H}$ and $d$ be the VC-dimension of $\mathcal{H}$. The total number of samples over all clients is $N$. Then, with probability at least $1-\delta$:
\begin{equation}\label{app_eq:generalization_bound}
\small
\begin{split}
& \max _{\left(\theta_{1}, \theta_{2}, \ldots, \theta_{m}\right)}\left| \sum_{i=1}^{m} \frac{|D_i|}{N} L_{\mathcal{D}_{i}}\left(\theta_{i} ; \mathbb{C},\left\{\boldsymbol{C}_{p}\right\}_{p=1}^{m}\right)- \sum_{i=1}^{m} \frac{|D_i|}{N} L_{\widehat{\mathcal{D}}_{i}}\left(\theta_{i} ; \mathbb{C},\left\{\boldsymbol{C}_{p}\right\}_{p=1}^{m}\right)\right| \\
& \leq \sqrt{\frac{N}{2} \log \frac{(m+1)|\mathbb{C}|}{\delta}}+\sqrt{\frac{d}{N} \log \frac{e N}{d}}.
\end{split}
\end{equation}
\end{theorem}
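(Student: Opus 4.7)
The approach is to express the left-hand side as a supremum of a loss gap over the hypothesis class, then split it by the triangle inequality into two independent sources of error: (i) a \emph{prototype-estimation} error coming from the fact that every prototype in $\mathbb{C}$ and $\{\boldsymbol{C}_p\}_{p=1}^m$ is an empirical mean computed from finitely many samples, and (ii) a \emph{hypothesis-complexity} error that measures uniform convergence of the per-sample loss as $\theta_i$ ranges over $\mathcal{H}$ when the prototypes are held fixed. Because $\sum_i |D_i|/N = 1$, the weighted sum on the left-hand side can be treated as a single expectation with respect to the mixture distribution $\sum_i (|D_i|/N)\,\mathcal{D}_i$, and the global $\max$ over $(\theta_1,\ldots,\theta_m)$ lets me take the supremum inside before applying concentration.

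For step (i), I would note that each local prototype $C_i^{(j)}$ is an empirical mean of bounded outputs $z(\mathbf{x})$ (the projection network can be assumed to have bounded range), and the global prototype $\bar C^{(j)}$ is a further convex combination of these. Hoeffding's inequality (or the bounded-differences inequality) applied coordinate-wise gives concentration of each prototype about its mean, and a union bound over the $m|\mathbb{C}|$ local prototypes and the $|\mathbb{C}|$ global ones produces the factor $(m+1)|\mathbb{C}|$ inside the logarithm. Propagating this deviation through the softmax-based contrastive losses $L_{\mathrm g}$ and $L_{\mathrm p}$ requires a Lipschitz estimate in the prototype arguments (with Lipschitz constant controlled by $1/\tau$ and the bound on $\|z(\mathbf{x})\|$); after rescaling the failure probability, this yields the first term $\sqrt{(N/2)\log((m+1)|\mathbb{C}|/\delta)}$.

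For step (ii), I would fix the prototypes and treat the composite loss as a function of $\theta_i$ only. Sauer's lemma bounds the growth function of $\mathcal{H}$ at $N$ points by $(eN/d)^d$, and a standard symmetrization argument (or the classical VC generalization inequality) then produces the uniform deviation term $\sqrt{(d/N)\log(eN/d)}$. Because all clients draw hypotheses from the same $\mathcal{H}$ and the outer weights $|D_i|/N$ sum to one, no extra union bound over clients is needed in this step; combining the two steps by the triangle inequality and absorbing the failure probabilities gives \eqref{app_eq:generalization_bound}.

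The main obstacle I expect is the Lipschitz analysis in step (i): the contrastive losses couple every prototype through a normalized exponential, so I have to bound the sensitivity of the log-softmax with respect to perturbations of each individual $\bar C^{(y_a)}$ and $C_p^{(y_a)}$ while keeping the resulting constant independent of $m$ and $|\mathbb{C}|$ (otherwise it would worsen the stated rate). A secondary subtlety is that the local prototypes $\boldsymbol{C}_p$ are produced by the learned $\theta_p$ on client $p$'s data, which creates a dependence between the prototypes and the hypothesis; I would handle this either by conditioning on the $\theta_p$ frozen at the previous communication round, so the prototypes are measurable with respect to earlier randomness, or by taking the supremum over $(\theta_1,\ldots,\theta_m)$ before applying the concentration step, which is consistent with the $\max$ stated in the theorem.
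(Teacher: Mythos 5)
Your plan is sound in its two key ingredients---a union bound over the $(m+1)|\mathbb{C}|$ prototypes to produce the first term and a VC/Sauer argument to produce the second---but it takes a genuinely different route from the paper. The paper does not decompose the error into ``prototype estimation'' plus ``fixed-prototype uniform convergence'': it applies McDiarmid's bounded-differences inequality directly to the supremum of the weighted loss gap $\max_{\theta}\bigl(\sum_i \frac{|D_i|}{N}L_{\mathcal{D}_i}-\sum_i \frac{|D_i|}{N}L_{\widehat{\mathcal{D}}_i}\bigr)$, invokes Boole's inequality over the $(m+1)|\mathbb{C}|$ prototype-dependent loss components to turn $\log\frac{1}{\delta}$ into $\log\frac{(m+1)|\mathbb{C}|}{\delta}$, and then bounds the residual expectation of the supremum by the empirical Rademacher complexity of $\mathcal{H}$, the VC growth-function bound, and Jensen's inequality to merge the per-client terms $\sum_i\frac{|D_i|}{N}\sqrt{\frac{d}{|D_i|}\log\frac{eN}{d}}\le\sqrt{\frac{d}{N}\log\frac{eN}{d}}$ (your ``mixture distribution'' viewpoint sidesteps this merging but the Jensen step is where the paper actually earns the $\sqrt{d/N}$ rate). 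The practical difference is the one you yourself flag as the main obstacle: your route requires a Lipschitz estimate of the log-softmax losses in the prototype arguments, which unavoidably introduces a constant of order $\sup\|z\|/\tau$ into the first term that the stated bound does not carry; the paper's route never performs that analysis because the prototypes enter only through the union bound, not through a perturbation argument. Carried out honestly, your decomposition would therefore prove a bound of the same form but with an extra multiplicative constant in the first term. Your observation about the statistical dependence between the learned $\theta_p$ and the prototypes $\boldsymbol{C}_p$ is a real issue that the paper's proof silently ignores, and your proposed fix (conditioning on the previous round's parameters, or taking the supremum before concentration) is more careful than what appears in the appendix.
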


\begin{proof}
\rev{
We start from the McDiarmid's inequality that 
\begin{equation} \label{eq:mc}
\mathbb{P}\left[g\left(X_{1}, \ldots, X_{n}\right)-\mathbb{E}\left[g\left(X_{1}, \ldots, X_{n}\right)\right] \geq \epsilon\right] \leq \exp \left(-\frac{2 \epsilon^{2}}{\sum_{i=1}^{n} c_{i}^{2}}\right)
\end{equation}
when 
\begin{equation}
\sup _{x_{1}, \ldots, x_{n}}\left|g\left(x_{1}, x_{2}, \ldots, x_{n}\right)-g\left(x_{1}, x_{2}, \ldots, x_{n}\right)\right| \leq c_{i}.
\end{equation}
Eq.~\ref{eq:mc} equals to
\begin{equation}
\mathbb{P}\left[g\left(\cdot\right)-\mathbb{E}\left[g\left(\cdot\right)\right] \leq \epsilon\right] \geq 1 - \exp \left(-\frac{2 \epsilon^{2}}{\sum_{i=1}^{n} c_{i}^{2}}\right)
\end{equation}
which means that with probability at least $1 - \exp \left(-\frac{2 \epsilon^{2}}{\sum_{i=1}^{n} c_{i}^{2}}\right)$,
\begin{equation}
g\left(\cdot\right)-\mathbb{E}\left[g\left(\cdot\right)\right] \leq \epsilon.
\end{equation}
Let $\delta = \exp \left(-\frac{2 \epsilon^{2}}{\sum_{i=1}^{n} c_{i}^{2}}\right)$, the above can be rewritten as with probability at least $1 - \delta$, 
\begin{equation}
g\left(\cdot\right)-\mathbb{E}\left[g\left(\cdot\right)\right] \leq \sqrt{\frac{\sum_{i=1}^{n} c_{i}^{2}}{2} \log \frac{1}{\delta}}.
\end{equation}
}

\rev{
For  prototype, by substituting $g\left(\cdot\right)$ with
\begin{equation}
\small
\max_{\left(\theta_{1}, \theta_{2}, \ldots, \theta_{m}\right)}\left(\sum_{i=1}^{m} \frac{|D_i|}{N} L_{\mathcal{D}_{i}}\left(\theta_{i} ; \cdot \right)- \sum_{i=1}^{m} \frac{|D_i|}{N} L_{\widehat{\mathcal{D}}_{i}}\left(\theta_{i} ; \cdot \right)\right),
\end{equation}
we can obtain that with probability at least $1 - \delta$, the following holds for a specific prototype,
\begin{equation}
\small
\begin{split}
& \max_{\left(\theta_{1}, \theta_{2}, \ldots, \theta_{m}\right)}\left(\sum_{i=1}^{m} \frac{|D_i|}{N} L_{\mathcal{D}_{i}}\left(\theta_{i} ; \cdot \right)- \sum_{i=1}^{m} \frac{|D_i|}{N} L_{\widehat{\mathcal{D}}_{i}}\left(\theta_{i} ; \cdot \right)\right) \\
- & \mathbb{E}\left[\max _{\left(\theta_{1}, \theta_{2}, \ldots, \theta_{m}\right)}\left(\sum_{i=1}^{m} \frac{|D_i|}{N}L_{\mathcal{D}_{i}}\left(\theta_{i}; \cdot \right) - \sum_{i=1}^{m}\frac{|D_i|}{N} L_{\widehat{\mathcal{D}}_{i}}\left(\theta_{i} ; \cdot \right)\right)\right]
\leq \sqrt{\frac{N}{2} \log \frac{1}{\delta}},
\end{split}
\end{equation}
}

\rev{
Considering there are $(m+1)|\mathbb{C}|$ prototypes in total, by using Boole's inequality, with probability at least $1-\delta$, the following holds,
\begin{equation}
\small
\begin{split}
& \max_{\left(\theta_{1}, \theta_{2}, \ldots, \theta_{m}\right)}\left(\sum_{i=1}^{m} \frac{|D_i|}{N} L_{\mathcal{D}_{i}}\left(\theta_{i} ; \mathbb{C},\left\{\boldsymbol{C}_{p}\right\}_{p=1}^{m}\right)- \sum_{i=1}^{m} \frac{|D_i|}{N} L_{\widehat{\mathcal{D}}_{i}}\left(\theta_{i} ; \mathbb{C},\left\{\boldsymbol{C}_{p}\right\}_{p=1}^{m}\right)\right) \\
\leq & \mathbb{E}\left[\max _{\left(\theta_{1}, \theta_{2}, \ldots, \theta_{m}\right)}\left(\sum_{i=1}^{m} \frac{|D_i|}{N}L_{\mathcal{D}_{i}}\left(\theta_{i}; \mathbb{C},\left\{\boldsymbol{C}_{p}\right\}_{p=1}^{m}\right)- \sum_{i=1}^{m}\frac{|D_i|}{N} L_{\widehat{\mathcal{D}}_{i}}\left(\theta_{i} ; \mathbb{C},\left\{\boldsymbol{C}_{p}\right\}_{p=1}^{m}\right)\right)\right] \\
& +\sqrt{\frac{N}{2} \log \frac{(m+1)|\mathbb{C}|}{\delta}},
\end{split}
\end{equation}
}
\noindent where $N$ is the total number of samples over all clients.

\begin{equation}
\small
\begin{split}
& \mathbb{E}\left[\max _{\left(\theta_{1}, \theta_{2}, \ldots, \theta_{m}\right)}\left(\sum_{i=1}^{m} \frac{|D_i|}{N} L_{\mathcal{D}_{i}}\left(\theta_{i}; \mathbb{C},\left\{\boldsymbol{C}_{p}\right\}_{p=1}^{m}\right)-\sum_{i=1}^{m} \frac{|D_i|}{N} L_{\widehat{\mathcal{D}}_{i}}\left(\theta_{i} ; \mathbb{C},\left\{\boldsymbol{C}_{p}\right\}_{p=1}^{m}\right)\right)\right] \\
\leq & \mathbb{E}\left[\sum_{i=1}^{m} \frac{|D_i|}{N}\max _{\theta_{i}}\left(L_{\mathcal{D}_{i}}\left(\theta_{i} ; \mathbb{C},\left\{\boldsymbol{c}_{p}\right\}_{p=1}^{m}\right)-L_{\widehat{\mathcal{D}}_{i}}\left(\theta_{i} ; \mathbb{C},\left\{\boldsymbol{c}_{p}\right\}_{p=1}^{m}\right)\right)\right] \\
\stackrel{(a)}{\leq} & \sum_{i=1}^{m} \frac{|D_i|}{N} \Re_{i}(\mathcal{H}) \\
\leq & \sum_{i=1}^{m} \frac{|D_i|}{N} \sqrt{\frac{d}{|D_i|} \log \frac{e |D_i|}{d}} \\
\leq & \sum_{i=1}^{m} \frac{|D_i|}{N} \sqrt{\frac{d}{|D_i|} \log \frac{e N}{d}} \\
\stackrel{(b)}{\leq} & \sqrt{\frac{d}{N} \log \frac{e N}{d}}
\end{split}
\end{equation}
\noindent where $\mathcal{H}$ is the hypothesis set of projector head $h$, $d$ is the VC-dimension of $\mathcal{H}$, $N$ is the total number of samples over all clients. (a) follows from the definition of Rademacher complexity
\rev{
\begin{equation}
\widehat{\Re}_{S}(\mathcal{G})=\underset{\boldsymbol{\sigma}}{\mathbb{E}}\left[\sup _{g \in \mathcal{G}} \frac{1}{m} \sum_{i=1}^{m} \sigma_{i} g\left(z_{i}\right)\right]
\end{equation}
where $m$ is the number of samples, $\sigma_i$ refers to independent uniform random variable taking value in $\{-1,+1\}$,
}
and (b) follows from Jensen's inequality. 

So,
\begin{equation}
\small
\begin{split}
& \max _{\left(\theta_{1}, \theta_{2}, \ldots, \theta_{m}\right)}\left| \sum_{i=1}^{m} \frac{|D_i|}{N}L_{\mathcal{D}_{i}}\left(\theta_{i} ; \mathbb{C},\left\{\boldsymbol{C}_{p}\right\}_{p=1}^{m}\right)- \sum_{i=1}^{m}\frac{|D_i|}{N} L_{\widehat{\mathcal{D}}_{i}}\left(\theta_{i} ; \mathbb{C},\left\{\boldsymbol{C}_{p}\right\}_{p=1}^{m}\right)\right| \\
& \leq \sqrt{\frac{N}{2} \log \frac{(m+1)|\mathbb{C}|}{\delta}}+\sqrt{\frac{d}{N} \log \frac{e N}{d}}.
\end{split}
\end{equation}

\end{proof}

\end{document}